\newenvironment{mk}{\noindent\color{blue} MK:} {}
\newcommand{\reflem}[1]{Lemma~\ref{#1}}
\newcommand{\refprp}[1]{Proposition~\ref{#1}}
\newcommand{\reffig}[1]{Figure~\ref{#1}}
\newcommand{\set}[2]{\left\{#1\mathrel{\left|\vphantom{#1}\vphantom{#2}\right.}#2\right\}}
\newcommand{\oneset}[1]{\left\{\mathinner{#1}\right\}}
\newcommand{\abs}[1]{\left|\mathinner{#1}\right|}
\newcommand{\N}{\mathbb{N}}
\newcommand{\ltrue}      {\ensuremath{\mathord{\top}}\xspace}
\newcommand{\lfalse}     {\ensuremath{\mathord{\bot}}\xspace}
\font\petite=cmmi10 at 8pt
\def\malcev{\mathbin{\hbox{$\bigcirc$\rlap{\kern-9pt\raise0,75pt\hbox{\petite m}}}}}
\newcommand{\Alpha}{\mathsf{alph}}
\newcommand{\varietyFont}[1]{\mathrm{\mathbf{#1}}}
\newcommand{\DA}{\varietyFont{D\hspace{-1pt}A}}
\newcommand{\LI}{\varietyFont{LI}}
\newcommand{\Jone}{\varietyFont{J}_1}
\newcommand{\Ap}{\varietyFont{A}}
\newcommand{\Jtrivial}{\varietyFont{J}}
\let\J\Jtrivial
\let\V\Vara
\newcommand{\VarFO}{\varietyFont{FO}}
\newcommand{\K}{\varietyFont{K}}
\newcommand{\D}{\varietyFont{D}}
\newcommand{\R}[1]{\varietyFont{R}_{#1}}
\let\L=\undefined
\newcommand{\L}[1]{\varietyFont{L}_{#1}}
\newcommand{\join}{\mathrel{\vee}}
\newcommand{\ord}{\mathrm{ord}}
\newcommand{\greenfont}[1] {\ensuremath{\mathcal{#1}}}
\newcommand{\greenR} {\greenfont{R}\xspace}
\newcommand{\Req}    {\mathrel{\greenR}}
\newcommand{\Rleq}   {\mathrel{\leq_\greenR}}
\newcommand{\Rg}     {\mathrel{>_\greenR}}
\newcommand{\greenL} {\greenfont{L}\xspace}
\newcommand{\Leq}    {\mathrel{\greenL}}
\newcommand{\Lleq}   {\mathrel{\leq_\greenL}}
\newcommand{\Ll}     {\mathrel{<_\greenL}}
\newcommand{\greenJ} {\greenfont{J}\xspace}
\newcommand{\Jeq}    {\mathrel{\greenJ}}
\newcommand{\Jleq}   {\mathrel{\leq_\greenJ}}
\newcommand{\Jl}     {\mathrel{<_\greenJ}}
\newcommand{\Right}{\mathrel{\triangleright}}
\newcommand{\Left}{\mathrel{\triangleleft}}
\newcommand{\complexityfont}[1]{\ensuremath{\mathrm{#1}}\xspace}
\newcommand{\NP}    {\complexityfont{NP}}
\newcommand{\logicfont}[1]{\mathrm{#1}}
\newcommand{\FO}{\logicfont{FO}}
\newcommand{\suc}{\ensuremath{{+}1}\xspace}
\newcommand{\X}{\mathop{\mathsf{X}\vphantom{b}}\nolimits}
\newcommand{\Y}{\mathop{\mathsf{Y}\vphantom{b}}\nolimits}
\newcommand{\Z}{\mathop{\mathsf{Z}\vphantom{b}}\nolimits}
\newtheorem{theorem}{Theorem}
\newtheorem{proposition}[theorem]{Proposition}
\newtheorem{lemma}[theorem]{Lemma}
\newtheorem{corollary}[theorem]{Corollary}
\newtheorem{conjecture}[theorem]{Conjecture}
\newtheorem{pro-remark}[theorem]{Remark}
\newenvironment{remark}{\begin{pro-remark}\rm}{\end{pro-remark}}
\renewenvironment{proof}[1][]{\pagebreak[3]\noindent\textit{Proof\ifthenelse{\equal{#1}{}}{}{ (#1)}. }}{\pagebreak[3]\medskip}
\newtheorem{expl}[theorem]{Example}
\setlist{itemsep=1pt,parsep=0pt,topsep=2pt}
\let\phi\varphi
\let\calV\LangV
\def\dast{\mathbin{\ast\ast}}
\def\calP{\mathcal{P}}
\begin{document}

\title{The $\FO^2$ alternation hierarchy is decidable}

\author{Manfred Kuf\-leitner\thanks{The
    first author was supported by the German Research Foundation
    (DFG) under grant \mbox{DI 435/5-1}.} \\
  {\small University of Stuttgart, Germany} \\
  {\small \texttt{kufleitner{@}fmi.uni-stuttgart.de}}
  \and Pascal Weil\thanks{The second author was supported by
    the grant ANR 2010 BLAN 0202 01 FREC.}\\
  {\small CNRS, LaBRI, UMR5800, F-33400 Talence, France} \\
  {\small Univ. Bordeaux, LaBRI, UMR5800, F-33400 Talence, France} \\
  {\small \texttt{pascal.weil{@}labri.fr}}
} 

\date{}

\maketitle

\begin{abstract}
  \noindent
  We consider the two-variable fragment $\FO^2[{<}]$ of first-order
  logic over finite words. Numerous characterizations of this class
  are known. Th{\'e}\-rien and Wilke have shown that it is decidable
  whether a given regular language is definable in $\FO^2[{<}]$.  From
  a practical point of view, as shown by Weis, $\FO^2[{<}]$ is
  interesting since its satisfiability problem is in
  $\NP$. Restricting the number of quantifier alternations yields an
  infinite hierarchy inside the class of $\FO^2[{<}]$-definable
  languages. We show that each level of this hierarchy is decidable.
  For this purpose, we relate each level of the hierarchy with a
  decidable variety of finite monoids.

  Our result implies that there are many different ways of climbing up
  the $\FO^2[{<}]$-quantifier alternation hierarchy: deterministic and
  co-deterministic products, Mal'cev products with definite and
  reverse definite semigroups, iterated block products with $\greenJ$-trivial monoids, and some inductively defined omega-term identities. A
  combinatorial tool in the process of ascension is that of
  condensed rankers, a refinement of the rankers of Weis and Immerman
  and the turtle programs of Schwentick, Th{\'e}rien, and Vollmer.
\end{abstract}

\pagebreak

\section{Introduction}

The investigation of logical fragments has a long history.  McNaughton
and Papert~\cite{mp71:short} showed that a language over finite words
is definable in first-order logic $\FO[{<}]$ if and only if it is
star-free. Combined with Sch{\"u}tzenberger's characterization of
star-free languages in terms of finite aperiodic
monoids~\cite{sch65sf:short}, this leads to an algorithm to decide
whether a given regular language is first-order definable. Many other
characterizations of this class have been given over the past 50
years, see~\cite{dg08SIWT:short} for an overview.  Moreover, mainly
due to its relation to linear temporal logic~\cite{kam68:short}, it
became relevant to a large number of application fields, such as
verification.

Very often one is interested in fragments of first-order logic. From a
practical point of view, the reason is that smaller fragments often
yield more efficient algorithms for computational problems such as
satisfiability. For example, satisfiability for $\FO[{<}]$ is
non-elementary~\cite{Sto74:short}, whereas the satisfiability problem
for first-order logic with only two variables is in $\NP$,
cf.~\cite{wei11phd}.  And on the theoretical side, fragments form the
basis of a descriptive complexity theory inside the regular
languages: the simpler a logical formula defining a language, the
easier the language.  Moreover, in contrast to classical
complexity theory, in some cases one can actually decide whether a
given language has a particular property.  From both the practical and the
theoretical point of view, several natural hierarchies have been
considered in the literature: the quantifier alternation hierarchy
inside $\FO[{<}]$ which coincides with the Straubing-Th{\'e}rien
hierarchy~\cite{str81tcs:short,the81tcs:short}, the quantifier
alternation hierarchy inside $\FO[{<},\suc]$ with a successor
predicate $\suc$ which coincides with the dot-depth
hierarchy~\cite{cb71jcss:short,tho82:short}, the until hierarchy of
temporal logic~\cite{tw01}, and the until-since hierarchy~\cite{tw04}.
Decidability is known for the levels of the until and the
the until-since hierarchies, and only for the very first
levels of the alternation hierarchies, see
e.g.~\cite{dgk08ijfcs:short,pin97handbook}.


Fragments are usually defined by restricting resources in a
formula. Such resources can be the predicates which are allowed, the
quantifier depth, the number of quantifier alternations, or the number
of variables. When the quantifier depth is restricted, only finitely
many languages are definable over a fixed alphabet: decidability of
the membership problem is not an issue in this case. When restricting
the number of variables which can be used (and reused), then
first-order logic $\FO^3[{<}]$ with three variables already has the
full expressive power of $\FO[{<}]$,
see~\cite{ik89iandc,kam68:short}. On the other hand, first-order logic
$\FO^2[{<}]$ with only two variables defines a proper subclass. The
languages definable in $\FO^2[{<}]$ have a huge number of different
characterizations, see
e.g.~\cite{dgk08ijfcs:short,tt02:short,tt07lmcs:short}. For example,
$\FO^2[{<}]$ has the same expressive power as $\Delta_2[{<}]$; the
latter is a fragment of $\FO[{<}]$ with two blocks of
quantifiers~\cite{tw98stoc:short}.

\emph{Turtle programs} are one of these numerous descriptions of
$\FO^2[{<}]$-defin\-able languages~\cite{stv01dlt:short}. They are
sequences of instructions of the form ``go to the next $a$-position''
and ``go to the previous $a$-position''. Using the term \emph{ranker}
for this concept and having a stronger focus on the order of positions
defined by such sequences, Weis and Immerman~\cite{wi09lmcs}
were able to give a combinatorial characterization of the alternation
hierarchy $\FO^2_m[{<}]$ inside $\FO^2[{<}]$. Straubing~\cite{str11csl}
gave an algebraic characterization of $\FO^2_m[{<}]$. But neither result yields the decidability of $\FO^2_m[{<}]$-definability for $m > 2$. In some sense, this is
the opposite of a previous result of the
authors~\cite[Thm. 6.1]{kw11LMCS}, who give necessary and
sufficient conditions which helped to decide the
$\FO^2_m[{<}]$-hierarchy with an error of at most one.  In this paper
we give a new algebraic characterization of $\FO^2_m[{<}]$, and
this characterization immediately yields decidability.

The algebraic approach to the membership problem of logical fragments
has several advantages. In favorable cases, it opens the road to decidability procedures.  Moreover, it allows a more \emph{semantic} comparison of
fragments; for example, the equality $\FO^2[{<}] = \Delta_2[{<}]$ was
obtained by showing that both $\FO^2[{<}]$ and $\Delta_2[{<}]$
correspond to the same variety of finite
monoids, namely $\DA$~\cite{pw97:short,tw98stoc:short}.

Building on previous detailed knowledge of the lattice of \emph{band}
varieties (varieties of idempotent monoids), Trotter and Weil defined
a sub-lattice of the lattice of subvarieties of $\DA$~\cite{tw97au},
which we call the {$\R{m}$-$\L{m}$-hierarchy}. These varieties have
many interesting properties and in particular, each $\R{m}$
(resp.~$\L{m}$) is efficiently decidable (by a combination of results
of Trotter and Weil \cite{tw97au}, Kuf\-leitner and Weil \cite{kw10sf},
and Straubing and Weil \cite{2012:StraubingWeil}, see
Section~\ref{sec: main section} for more details). Moreover, one can
climb up the $\R{m}$-$\L{m}$-hierarchy algebraically, using Mal'cev
products, see~\cite{kw10sf} and Section~\ref{sec: preliminaries}
below; language-theoretically, in terms of alternated closures under
deterministic and co-deterministic products~\cite{1980:Pin,kw11LMCS}; and
combinatorially using \emph{condensed} rankers,
see~\cite{kw09mfcs:short,lps08ifiptcs} and Section~\ref{sec:
  preliminaries}.

We relate the $\FO^2[{<}]$ quantifier alternation hierarchy with the
$\R{m}$-$\L{m}$-hierarchy. More precisely, the main result of this
paper is that a language is definable in $\FO^2_m[{<}]$ if and only if
it is recognized by a monoid in $\R{m+1} \cap \L{m+1}$, thus
establishing the decidability of each $\FO^2_m[{<}]$. This result was
first conjectured in~\cite{kw09mfcs:short}, where one inclusion was
established.  Our proof combines a technique introduced by
Kl{\'i}ma~\cite{kli11dm:short} and a substitution
idea~\cite{kl11lics:short} with algebraic and combinatorial tools
inspired by~\cite{kw11LMCS}. The proof is by induction and the base
case is Simon's Theorem on piecewise testable languages~\cite{sim75}.

\section{Preliminaries}\label{sec: preliminaries}

Let $A$ be a finite alphabet and let $A^*$ be the set of all finite
words over~$A$.  The \emph{length} $\abs{u}$ of a word $u = a_1 \cdots
a_n$, $a_i \in A$, is $n$ and its \emph{alphabet} is $\Alpha(u) =
\oneset{a_1, \ldots, a_n} \subseteq A$. A position $i$ of $u = a_1
\cdots a_n$ is an \emph{$a$-position} if $a_i = a$.  A factorization
$u = u_{-} a u_{+}$ is the \emph{$a$-left factorization} of $u$ if $a
\not\in \Alpha(u_{-})$, and it is the \emph{$a$-right factorization}
if $a \not\in \Alpha(u_{+})$, i.e., we factor at the first or at the
last $a$-position.

\subsection{Rankers}

A \emph{ranker} is a nonempty word over the alphabet
$\set{\X_a,\Y_a}{a \in A}$. It is interpreted as a sequence of
instructions of the form ``go to the next $a$-position'' and ``go to
the previous $a$-position''. More formally, for $u = a_1 \cdots a_n
\in A^*$ and $x \in \oneset{0,\ldots,n+1}$ we let
\begin{align*}
  \X_a(u, x) &= \min\set{y}{y > x \text{ and } a_y = a}, %
  \quad %
  &\X_a(u) &= \X_a(u, 0),%
  \\
  \Y_a(u, x) &= \max\set{y}{y < x \text{ and } a_y = a}, %
  \quad %
  &\Y_a(u) &= \Y_a(u, n+1).
\end{align*}
Here, both the minimum and the maximum of the empty set are undefined.
The modality $\X_a$ is for ``ne$\X$t-$a$'' and $\Y_a$ is for
``$\Y$esterday-$a$''.  For $r = \Z s$, $\Z \in \set{\X_a,\Y_a}{a\in
  A}$, we set
\begin{align*}
  r(u,x) &= s(u,\Z(u,x)), & r(u) &= s(u,\Z(u)).
\end{align*}
In particular, rankers are executed (as a set of instructions) from
left to right.
Every ranker $r$ either defines a unique position in
a word $u$, or it is undefined on $u$. For example, $\X_a \Y_b
\X_c(bca) = 2$ and $\X_a \Y_b \X_c(bac) = 3$ whereas $\X_a \Y_b
\X_c(cabc)$ and $\X_a \Y_b \X_c(bcba)$ are undefined.  A ranker $r$ is
\emph{condensed} on $u$ if it is defined and, during the execution of
$r$, no previously visited position is
overrun~\cite{kw11LMCS}. One can think of condensed rankers
as \emph{zooming in} on the position they define, see Figure~\ref{fig: condensed}.
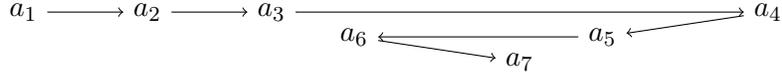
\begin{figure}\centering
    \begin{tikzpicture}[scale=0.11]
      \draw (0.0,-3.0) node (n1) {$a_1$};
      \draw (15.0,-3.0) node (n2) {$a_2$};
      \draw (30.0,-3.0) node (n3) {$a_3$};
      \draw (90.0,-3.0) node (n4) {$a_4$};
      \draw (70.0,-6.0) node (n5) {$a_5$};
      \draw (40.0,-6.0) node (n6) {$a_6$};
      \draw (60.0,-9.0) node (n7) {$a_7$};
	
      \draw[->] (n1) -- (n2);
      \draw[->] (n2) -- (n3);
      \draw[->] (n3) -- (n4);
      \draw[->] (n4) -- (n5);
      \draw[->] (n5) -- (n6);
      \draw[->] (n6) -- (n7);
    \end{tikzpicture}
    \caption{The positions defined by $r$ in $u$, when $r = \mathsf{X}_{a_1} \mathsf{X}_{a_2} \mathsf{X}_{a_3} \mathsf{X}_{a_4} \mathsf{Y}_{a_5} \mathsf{Y}_{a_6} \mathsf{X}_{a_7}$ is condensed on $u$}\label{fig: condensed}
\end{figure}
More formally $r = \Z_1 \cdots \Z_k$, $\Z_i \in
\set{\X_a,\Y_a}{a\in A}$, is \emph{condensed} on $u$ if there exists a
chain of open intervals
\begin{equation*}
  (0;\abs{u}+1) = (x_0; y_0) \supset (x_1;y_1) \supset \cdots \supset
  (x_{n-1};y_{n-1}) \ni r(u)
\end{equation*}
such that for all $1 \leq \ell \leq n-1$ the following properties are
satisfied:
\begin{itemize}
\item If $\Z_{\ell} \Z_{\ell + 1} = \X_a \X_b$, then
  $(x_\ell;y_\ell) = (\X_a(u,x_{\ell-1}); y_{\ell-1})$.
\item If $\Z_{\ell} \Z_{\ell + 1} = \Y_a \Y_b$, then
  $(x_\ell;y_\ell) = (x_{\ell-1}; \Y_a(u,y_{\ell-1})$.
\item If $\Z_{\ell} \Z_{\ell+1} = \X_a \Y_b$, then
  $(x_\ell;y_\ell) = (x_{\ell-1}; \X_a(u,x_{\ell-1}))$.
\item If $\Z_{\ell} \Z_{\ell+1} = \Y_a \X_b$, then
  $(x_\ell;y_\ell) = (\Y_a(u,y_{\ell-1}); y_{\ell-1})$.
\end{itemize}
\pagebreak[3]
For example, $\X_a \Y_b \X_c$ is condensed on $bca$ but not on $bac$.

The \emph{depth} of a ranker is its length as a word.  A \emph{block}
of a ranker is a maximal factor of the form $\X_{a_1} \cdots \X_{a_k}$
or of the form $\Y_{b_1} \cdots \Y_{b_\ell}$. A ranker with $m$
blocks changes direction $m-1$ times. By $R_{m,n}$ we denote the
class of all rankers with depth at most $n$ and with up to $m$ blocks.
We write $R^{\X}_{m,n}$ for the set of all rankers in $R_{m,n}$ which
start with an $\X_a$-modality and we write $R^{\Y}_{m,n}$ for all
rankers in $R_{m,n}$ which start with a $\Y_a$-modality.

We define $u \Right_{m,n} v$ if the same
rankers in $R^{\X}_{m,n} \cup R^{\Y}_{m-1,n-1}$ are condensed on $u$ and $v$. Similarly, $u
\Left_{m,n} v$ if the same rankers in
$R^{\Y}_{m,n} \cup R^{\X}_{m-1,n-1}$ are condensed on $u$ and $v$. The relations $\Right_{m,n}$ and
$\Left_{m,n}$ are finite index congruences \cite[Lem.~3.13]{kw11LMCS}.

The \emph{order type} $\ord(i,j)$ is one of $\oneset{{<},{=},{>}}$,
depending on whether $i<j$, $i=j$, or $i>j$, respectively. We define
$u \equiv_{m,n} v$ if
\begin{itemize}
\item the same rankers in $R_{m,n}$ are defined on $u$ and $v$,
\item for all $r \in R^{\X}_{m,n}$ and $s \in R^{\Y}_{m,n-1}$:\;
  $\ord(r(u),s(u)) = \ord(r(v),s(v))$,
\item for all $r \in R^{\Y}_{m,n}$ and $s \in R^{\X}_{m,n-1}$:\;
  $\ord(r(u),s(u)) = \ord(r(v),s(v))$,
\item for all $r \in R^{\X}_{m,n}$ and $s \in R^{\X}_{m-1,n-1}$:\;
  $\ord(r(u),s(u)) = \ord(r(v),s(v))$,
\item for all $r \in R^{\Y}_{m,n}$ and $s \in R^{\Y}_{m-1,n-1}$:\;
  $\ord(r(u),s(u)) = \ord(r(v),s(v))$.
\end{itemize}

\begin{remark}\label{remark J}
For $m = 1$, each of the families
$(\equiv_{1,n})_n$, $(\Right_{1,n})_n$, and $(\Left_{1,n})_n$ defines the class of
piecewise testable languages, see
e.g.~\cite{kli11dm:short,sim75}. Recall that a language $L \subseteq A^*$ is \emph{piecewise testable} if it is a Boolean combination of languages of the form $A^*a_1A^* \cdots a_kA^*$ ($k\ge 0$, $a_1,\ldots,a_k \in A$).
\end{remark}

\subsection{First-order Logic}

We denote by $\FO[<]$ the first-order logic over words interpreted as
labeled linear orders. The atomic formulas are $\ltrue$ (for
\emph{true}), $\lfalse$ (for \emph{false}), the unary predicates
$\mathbf{a}(x)$ (one for each $a\in A$), and the binary predicate $x <
y$ for variables $x$ and $y$. Variables range over the linearly
ordered positions of a word and $\mathbf{a}(x)$ means that $x$ is
an $a$-position.
Apart from the Boolean connectives, we allow composition of formulas
using existential quantification $\exists x\colon \varphi$ and
universal quantification $\forall x\colon \varphi$ for $\varphi \in
\FO[{<}]$. The semantics is as usual. A \emph{sentence} in $\FO[{<}]$ is a
formula without free variables. For a sentence $\varphi$ the
\emph{language defined by $\varphi$}, denoted by $L(\varphi)$, is the
set of all words $u \in A^*$ which model $\varphi$.

The fragment \emph{$\FO^2[{<}]$} of first-order logic consists of all
formulas which use at most two different names for the variables.
This is a natural restriction, since $\FO$ with three variables
already has the full expressive power of $\FO$.  A formula $\varphi
\in \FO^2[{<}]$ is in $\FO^2_{m}[{<}]$ if, on every path of its parse
tree, $\varphi$ has at most $m$ blocks of alternating quantifiers.

Note that $\FO^2_1[{<}]$-definable languages are exactly the piecewise testable languages, cf.~\cite{str11csl}. 
For $m\ge 2$, we rely on the following important result, due to Weis and Immerman~\cite[Thm.~4.5]{wi09lmcs}.

\begin{theorem}\label{WI 2009}
A language $L$ is definable in $\FO^2_m[{<}]$ if and only if there exists $n \in \N$ such that $L$ is a union of $\equiv_{m,n}$-classes.
\end{theorem}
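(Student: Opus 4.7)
The plan is to proceed by induction on the block count $m$, with the base case $m=1$ furnished by \refrem{remark J}: both $\FO^2_1[{<}]$-definability and the property of being a union of $\equiv_{1,n}$-classes (for some $n$) coincide with piecewise testability. For the inductive step I handle the two directions separately.

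For the ($\Leftarrow$) direction, since $\equiv_{m,n}$ is a finite-index congruence, it suffices to exhibit an $\FO^2_m[{<}]$-formula of bounded quantifier depth defining each class $[u]_{m,n}$. The core is a translation lemma: every ranker $r \in R_{m,n}$ corresponds to an $\FO^2$-formula $\varphi_r(x)$ with $m$ quantifier blocks and depth $n$ asserting that $x$ is the position $r(w)$. The translation is inductive on $r$: a block $\X_{a_1}\cdots \X_{a_j}$ is coded as a nested chain of existentials on the alternate variable, with the ``no earlier $a_i$'' minimality condition absorbed into the same quantifier block by standard Boolean manipulation; a direction change between $\X$ and $\Y$ introduces exactly one quantifier alternation. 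Then ``$r$ is defined on $w$'' is $\exists x\, \varphi_r(x)$ and each order type $\ord(r(w),s(w))$ is a Boolean combination of formulas such as $\exists x\, \exists y\, (\varphi_r(x) \wedge \varphi_s(y) \wedge x<y)$; conjoining these over all rankers involved in the definition of $\equiv_{m,n}$ yields the desired class-defining formula.

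For the ($\Rightarrow$) direction, I argue by contrapositive through an \EF\ game tailored to $\FO^2_m[{<}]$ of quantifier depth $n$: $n$ rounds, two pebbles (that may be reused), and Spoiler restricted to at most $m$ alternating direction blocks. Given $u \equiv_{m,n} v$, Duplicator's strategy is to interpret each partial play as executing a prefix of a ranker and respond with the corresponding position in the opposite word. The four order-comparison clauses in the definition of $\equiv_{m,n}$ correspond to the possible configurations Spoiler can create: the clauses involving $(m,n-1)$-rankers handle the case where Spoiler's new move lies in the current block, while the clauses involving $(m-1,n-1)$-rankers invoke the inductive hypothesis on $m-1$ to handle the case where Spoiler has just opened a new block. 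A routine induction on $n$, nested inside the induction on $m$, then shows that Duplicator's response preserves all atomic formulas and all order comparisons already recorded by the congruence.

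The main obstacle is coordinating pebble reuse with block alternation: when Duplicator responds via a ranker, that ranker must be \emph{condensed} on her word, otherwise her new pebble could overshoot a previously pebbled position and falsify an order constraint recorded earlier in the play. This is the whole point of the condensed-ranker machinery and explains the asymmetric form of $\equiv_{m,n}$: the $(m-1,n-1)$-rankers record precisely the information still available about the old direction after a switch, with one block and one round already consumed. Matching the four clauses of the definition to the correct phases of the game, and checking that Duplicator's chosen ranker remains condensed throughout, is the most delicate technical step.
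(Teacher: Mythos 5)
The first thing to note is that the paper does not prove this statement at all: it is quoted from Weis and Immerman \cite[Thm.~4.5]{wi09lmcs}, and the remark immediately following it observes that the definition of $\equiv_{m,n}$ used here is formally different from theirs, so that even invoking their result requires a (claimed elementary) reconciliation of the two definitions. Your outline therefore has to be measured against Weis and Immerman's original proof, and at the level of architecture it does match it: piecewise testability for $m=1$, an explicit ranker-to-formula translation for one direction, and an induction on formulas (your game) for the other, with the asymmetric clauses of $\equiv_{m,n}$ absorbing the loss of one block and one unit of depth at each change of direction.

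Both directions, however, have genuine gaps at exactly the points you label as routine. In the ($\Leftarrow$) direction, asserting that $y$ is the position $\X_a(w,x)$ requires the conjunct ``there is no $a$-position strictly between $x$ and $y$'', a negated existential that must mention both $x$ and $y$ simultaneously; with only two variable names this forces a re-quantification, and it is precisely this that threatens to turn an $m$-block ranker into an $(m{+}1)$-block formula. The claim that the minimality conditions can be ``absorbed into the same quantifier block by standard Boolean manipulation'' is the whole difficulty of that direction, and the offsets $R^{\Y}_{m,n-1}$, $R^{\X}_{m-1,n-1}$, \ldots{} in the definition of $\equiv_{m,n}$ exist exactly to make this accounting come out right; your sketch never performs it. In the ($\Rightarrow$) direction, the game characterizing $\FO^2_m[{<}]$ of depth $n$ restricts the alternation of Spoiler's existential and universal moves (i.e., which word he plays in), not the direction of his movement along the word; restricting Spoiler a priori to ``$m$ alternating direction blocks'' builds in the correspondence between quantifier alternation and ranker-direction alternation that the theorem is supposed to establish. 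Moreover, Spoiler may pebble an arbitrary position, not one reachable as ``the next $a$'' from the current configuration, so Duplicator cannot simply ``interpret each partial play as executing a prefix of a ranker''; the crux of Weis and Immerman's argument is to show that an arbitrary position can be matched using only the ranker-definedness and order-type data recorded by $\equiv_{m,n}$, and that step is absent here. Finally, the definitional discrepancy flagged in the paper's own remark is also left unaddressed.
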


\begin{remark}
The definition of $\equiv_{m,n}$ above is formally different from the conditions in Weis and Immerman's \cite[Thm.~4.5]{wi09lmcs}. A careful but elementary examination reveals that they are actually equivalent.
\end{remark}

\subsection{Algebra}

A monoid $M$ \emph{recognizes} a language $L \subseteq A^*$ if there
exists a morphism $\varphi : A^* \to M$ such that $L =
\varphi^{-1}\varphi(L)$. If $\varphi : A^* \to M$ is a morphism,
then we set $u \equiv_\varphi v$ if $\varphi(u) = \varphi(v)$. The
join ${\equiv_1} \join {\equiv_2}$ of two congruences $\equiv_1$ and
$\equiv_2$ is the least congruence containing $\equiv_1$
and~$\equiv_2$. An element $u$ is \emph{idempotent} if $u^2 = u$. The
set of all idempotents of a monoid $M$ is denoted by $E(M)$.  For
every finite monoid~$M$ there exists $\omega \in \N$ such that
$u^\omega$ is idempotent for all $u \in M$. \emph{Green's relations}
$\greenJ$, $\greenR$, and $\greenL$ are an important concept to
describe the structural properties of a monoid $M$: we set $u \Jleq v$
(resp.\ $u \Rleq v$, $u \Lleq v$) if $u = pvq$ (resp.\ $u = vq$, $u =
pv$) for some $p,q \in M$. We also define $u \Jeq v$ (resp.\ $u \Req
v$, $u \Leq v$) if $u \Jleq v$ and $v \Jleq u$ (resp.\ $u \Rleq v$ and
$v \Rleq u$, $u \Lleq v$ and $v \Lleq u$).  A monoid $M$ is
\emph{$\greenJ$-trivial} (resp.\ \emph{$\greenR$-trivial}, \emph{$\greenL$-trivial})
if $\greenJ$ (resp.\ $\greenR$, $\greenL$) is the identity relation on
$M$. We define the relations $\sim_\K$, $\sim_\D$, and $\sim_\LI$ on $M$ as follows:
\begin{itemize}
\item $u \sim_\K v$ if and only if, for all $e\in E(M)$, we have either $eu,ev \Jl e$,
  or $eu = ev$.
\item $u \sim_\D v$ if and only if, for all $f \in E(M)$, we have either $uf,vf \Jl f$,
  or $uf = vf$.
\item $u \sim_\LI v$ if and only if, for all $e,f \in E(M)$ such that $e\Jeq f$, we have either $euf,evf \Jl e$,
  or $euf = evf$.
\end{itemize}
The relations $\sim_\K$, $\sim_\D$ and $\sim_\LI$ are
congruences~\cite{krt68arbib8}. If $\V$ is a class of finite monoids, we say that a monoid $M$ is in $\K\malcev \V$ (resp.\ $\D\malcev\V$, $\LI\malcev\V$) if $M/{\sim_\K} \in \V$ (resp.\ $M/{\sim_\D} \in \V$, $M/{\sim_\LI} \in \V$). The classes $\K\malcev \V$, $\D\malcev \V$ and  $\LI\malcev \V$ are called \emph{Mal'cev products} and they are usually defined in terms of relational morphisms. In the present context however, the definition above will be sufficient~\cite{krt68arbib8}, see \cite{hw99sf}.
We will need the following classes of finite monoids:
\begin{itemize}
\item $\Jone$ consists of all finite commutative monoids satisfying
  $x^2 = x$.
\item $\Jtrivial$ (resp.\ $\R{}$, $\L{}$) consists of all finite
  $\greenJ$-trivial (resp.\ $\greenR$-trivial, $\greenL$-trivial)
  monoids.
\item $\Ap$ consists of all finite monoids satisfying $x^{\omega+1} =
  x^\omega$. Monoids in $\Ap$ are called \emph{aperiodic}.
\item $\DA$ consists of all finite monoids satisfying $(xy)^\omega x
  (xy)^\omega = (xy)^\omega$.
\item $\R{1} = \L{1} = \Jtrivial$, \ $\R{m+1} = \K \malcev \L{m}$, \
  $\L{m+1} = \D \malcev \R{m}$.
\end{itemize}
It is well known that
\begin{gather*}
  \DA = \LI \malcev \Jone, \ 
  \R{2} = \R{}, \
  \L{2} = \L{}, \ 
  \R{} \cap \L{} = \Jtrivial, \text{and} \\
  \R{m} \cup \L{m} \;\subseteq\; \R{m+1} \cap \L{m+1} 
  \;\subset\; \DA \;\subset\; \Ap
\end{gather*}
see e.g.~\cite{pin86}. The $\R{m}$-$\L{m}$-hierarchy is depicted in
\reffig{fig:RmLm}.

\begin{figure}
  \centering
  \begin{tikzpicture}
      \draw 
           (0, 0)    node[inner sep=0pt] (Jone) {$\bullet$}
      ++(   0, 0.75)  node[inner sep=0pt] (J) {$\bullet$}
      ++(-1.0, 0.5)  node[inner sep=0pt] (Rtriv) {$\bullet$}
      ++( 2.0, 0.0)  node[inner sep=0pt] (Ltriv) {$\bullet$}
      ++(-1.0, 0.5)  node[inner sep=0pt] (RjoinL) {$\bullet$}
      ++( 0.0, 0.75) node[inner sep=0pt] (RtcapLt) {$\bullet$}
      ++(-1.0, 0.5)  node[inner sep=0pt] (Rthree) {$\bullet$}
      ++( 2.0, 0.0)  node[inner sep=0pt] (Lthree) {$\bullet$}

      ++(-1.0, 0.5)  node[inner sep=0pt] (RtjoinLt) {$\bullet$}
      ++( 0.0, 0.75) node[inner sep=0pt] (RfcapLf) {$\bullet$}
      ++(-1.0, 0.5)  node[inner sep=0pt] (Rfour) {$\bullet$}
      ++( 2.0, 0.0)  node[inner sep=0pt] (Lfour) {$\bullet$}
      ++(-1.0, 0.5)  node[inner sep=10pt] (RvjoinLv) {}

      ++ (0.0, -0.3)  node[anchor=south,inner sep=10pt] (dots) {$\vdots$}
      ++ (0.0, 0.9)  node[anchor=south,inner sep=0pt] (DA) {$\DA$};
      
      \draw (Jone) node[below,outer sep=1mm] {$\R{1} = \L{1} = \Jtrivial$}
            (J)  node[left,inner sep=0mm,outer sep=0mm,anchor=north east] {$\J = \R{2} \cap \L{2}${}$\,$}
            (Rtriv)    node[left,outer sep=1mm,anchor=east] {$\R{}=\R{2}$}
            (Ltriv)    node[right,outer sep=1mm,anchor=west] {$\L{2}=\L{}$}
            (RjoinL)   node[right,inner sep=0mm,outer sep=0mm,anchor=south west] {\ $\R{2}\vee\L{2}$}
            (RtcapLt)  node[left,inner sep=0mm,outer sep=0mm,anchor=north east] {$\R{3}\cap\L{3}${}$\,$}
            (Rthree)   node[left,outer sep=1mm,anchor=east] {$\R{3}$}
            (Lthree)   node[right,outer sep=1mm,anchor=west] {$\L{3}$}
            (Rfour)    node[left,outer sep=1mm,anchor=east] {$\R{4}$}
            (Lfour)    node[right,outer sep=1mm,anchor=west] {$\L{4}$}
            (RtjoinLt) node[right,inner sep=0mm,outer sep=0mm,anchor=south west] {\ $\R{3}\vee\L{3}$}
            (RfcapLf)  node[left,inner sep=0mm,outer sep=0mm,anchor=north east] {$\R{4}\cap\L{4}${}$\,$};
      
      \draw (Jone.center) -- (J.center)
            (J.center) -- (Rtriv.center) -- (RjoinL.center)
            (J.center) -- (Ltriv.center) -- (RjoinL.center)
            (RjoinL.center) -- (RtcapLt.center)
            (RtcapLt.center) -- (Rthree.center) -- (RtjoinLt.center)
            (RtcapLt.center) -- (Lthree.center) -- (RtjoinLt.center)
            (RtjoinLt.center) -- (RfcapLf.center)
            (RfcapLf.center) -- (Rfour.center) -- (RvjoinLv)
            (RfcapLf.center) -- (Lfour.center) -- (RvjoinLv);
  \end{tikzpicture}
  \caption{The $\R{m}$-$\L{m}$-hierarchy}\label{fig:RmLm}
\end{figure}

\subsection{The variety approach to the decidability of $\FO^2_m[{<}]$}

Classes of finite monoids that are closed under taking submonoids, homomorphic images and finite direct products are called \emph{pseudovarieties}. The classes of finite monoids $\Jone$, $\Jtrivial$, $\Ap$, $\DA$, $\R{m}$ and $\L{m}$ introduced above are all pseudovarieties. 

If $\V$ is a pseudovariety of monoids, the class $\calV$ of languages recognized by a monoid in $\V$ is called a \emph{variety of languages}. Eilenberg's variety theorem (see \textit{e.g.}~\cite[Annex~B]{pp04}) shows that varieties of languages are characterized by natural closure properties, and that the correspondence $\V \mapsto \calV$ is onto. Elementary automata theory shows in addition that a language $L$ is recognized by a monoid in a pseudovariety $\V$ if and only the syntactic monoid of $L$ is in $\V$. It follows that if $\V$ has a decidable membership problem, then so does the corresponding variety of languages $\calV$.

Simon's Theorem on piecewise testable
languages~\cite{kli11dm:short,sim75} is an important instance of this Eilenberg correspondence: a language $L$ is recognizable by a monoid in $\Jtrivial$ if and only if $L$ is piecewise testable (and hence, as we already observed, if and only if $L$ is definable
in~$\FO^2_1[{<}]$). Simon's result implies the decidability of piecewise testability.

It immediately follows from the definition that
membership in $\R{m}$ and $\L{m}$ is decidable for all~$m$ since
membership in $\Jtrivial$ is decidable (see Corollary~\ref{corollary decidability} for a more precise statement). Many additional properties of the pseudovarieties $\R{m}$ and $\L{m}$, and
of the corresponding varieties of languages were established by the authors \cite{kw10sf,kw11LMCS,tw97au}. We will use in particular the following results, respectively \cite[Cor.~3.15]{kw11LMCS} and \cite[Thms.~2.1 and~3.5]{kw10sf}.

\begin{proposition}\label{generate Rm Lm}
An $A$-generated monoid $M$ is in $\R{m}$ (resp.\ $\L{m}$) if and only if
there exists an integer $n$ such that $M$ is a quotient of $A^* /
{\Right_{m,n}}$ (resp.\ $A^* / {\Left_{m,n}}$).
\end{proposition}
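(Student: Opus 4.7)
The plan is to argue by induction on $m$, treating $\R{m}$ and $\L{m}$ simultaneously. By the evident left-right symmetry of all the definitions involved (swap $\X_a \leftrightarrow \Y_a$ and reverse words), it suffices to prove the statement for $\R{m}$. The base case $m=1$ is immediate: since $\R{1} = \Jtrivial$ and, by Remark~\ref{remark J}, the family $(\Right_{1,n})_n$ defines exactly the piecewise testable languages, Simon's theorem supplies both directions of the equivalence.

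For the inductive step, I would use the recurrence $\R{m+1} = \K \malcev \L{m}$, which says that $M \in \R{m+1}$ iff $M/{\sim_\K} \in \L{m}$. For the ``if'' direction I would show that $A^*/{\Right_{m+1,n}}$ lies in $\R{m+1}$ by verifying that its $\sim_\K$-quotient lies in $\L{m}$. Here the structural observation is that $\Right_{m+1,n}$ is built from condensed rankers of $R^{\X}_{m+1,n}$ (which open up a fresh first block on the left) together with rankers of $R^{\Y}_{m,n-1}$ (which carry exactly the $\Left_{m,n-1}$-information). The algebraic content of $\sim_\K$ is that equal elements behave identically once an idempotent prefix has been reached; combinatorially, once a condensed $\X$-prefix has been fixed, only the $\Y$-starting rankers contribute further information. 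This identifies $(A^*/\Right_{m+1,n})/{\sim_\K}$ with a quotient of $A^*/\Left_{m,n-1}$, which by the induction hypothesis lies in $\L{m}$.

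For the ``only if'' direction, take $M \in \R{m+1}$ with canonical morphism $\varphi\colon A^* \to M$. Then $M/{\sim_\K} \in \L{m}$, and by induction it is a quotient of $A^*/\Left_{m,n}$ for some $n$. Choose $N$ large enough compared to $|M|$, $\omega(M)$ and $n$ so that every condensed ranker in $R^{\X}_{m+1,N}$ whose first block has length at least $\omega(M)$ lands on a position whose prefix image in $M$ is idempotent, and so that its suffix behavior is determined by $\Left_{m,n}$-type data, carried by the $R^{\Y}_{m,N-1}$ part of $\Right_{m+1,N}$. If $u \Right_{m+1,N} v$, then the $\Y$-starting rankers pin down $\pi\varphi(u) = \pi\varphi(v)$ in $M/{\sim_\K}$ (where $\pi\colon M \to M/{\sim_\K}$), and the initial $\X$-block produces a common idempotent prefix $e = \varphi(u_0) = \varphi(v_0)$; the defining property of $\sim_\K$ then forces $\varphi(u) = \varphi(v)$, exhibiting $M$ as a quotient of $A^*/\Right_{m+1,N}$.

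The main obstacle I expect is the precise translation of the algebraic clause ``$eu, ev \Jl e$ or $eu = ev$'' into ranker combinatorics, and in particular justifying that \emph{condensed} rankers (rather than arbitrary ones) are sufficient to capture it. The delicate point is the bookkeeping between the depth $N$, the constant $\omega(M)$, and the inductive depth $n$: one must ensure that a single first $\X$-block is long enough to stabilize on an idempotent prefix while the remaining suffix still has room, inside $R^{\Y}_{m,N-1}$, to express all the $\Left_{m,n}$-equivalence relevant for $\pi\varphi$. The nested decrements ``$m$ blocks, depth $n-1$'' appearing in the definition of $\Right_{m+1,n}$ are tailored exactly to make this induction go through, and verifying that they match the Mal'cev product $\K \malcev \L{m}$ is the technical heart of the argument.
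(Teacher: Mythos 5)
First, note that the paper does not actually prove this proposition: it is imported verbatim from the authors' earlier work (\cite[Cor.~3.15]{kw11LMCS}), so there is no in-paper proof to compare against. Your overall strategy --- induction on $m$ driven by the recurrences $\R{m+1} = \K\malcev\L{m}$ and $\L{m+1} = \D\malcev\R{m}$, with Simon's theorem as the base case --- is the natural one and is essentially how the cited source proceeds, so the plan is sound in outline.

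As written, however, the argument has a genuine gap in the ``only if'' direction. From $u \Right_{m+1,N} v$ you get $u \Left_{m,N-1} v$ and hence, by induction, $\varphi(u) \sim_\K \varphi(v)$; but $\sim_\K$-equivalence of images does not by itself give $\varphi(u) = \varphi(v)$, and your proposed mechanism --- ``the initial $\X$-block produces a common idempotent prefix $e$, and the defining property of $\sim_\K$ then forces equality'' --- does not work. A word need not have any prefix with idempotent image, and even if it did, the clause defining $\sim_\K$ only yields $e\varphi(u')=e\varphi(v')$ \emph{or} $e\varphi(u'), e\varphi(v') \Jl e$, and nothing in your setup excludes the second alternative. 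The correct mechanism (visible in this paper's own use of the same machinery in the proof of Proposition~\ref{prp:main}) is to factor $u$ along its $\greenR$-factorization $u = s_1a_1\cdots s_ka_ks_{k+1}$, observe via \reflem{lem:da} that the letters $a_i$ are located by a condensed ranker $\X_{a_1}\cdots\X_{a_k}$, transfer the ranker equivalences to the individual factors with \reflem{lem:combi}, and glue with the lifting statement of \reflem{lem:lifteq} ($s \Req sx$ and $x\sim_\K y$ imply $sx=sy$), whose relevant idempotent is $(xz)^\omega$ with $sxz=s$, not a prefix image. The same omission affects the ``if'' direction: the identification of $(A^*/{\Right_{m+1,n}})/{\sim_\K}$ with a quotient of $A^*/{\Left_{m,n-1}}$ is exactly the hard combinatorial content (again resting on the factorization lemmas), and it is asserted rather than proved. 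In short, the skeleton is right, but the two steps you defer as ``obstacles'' constitute essentially the entire proof.
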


Let $x_1, x_2, \ldots$ be a sequence of variables. For each word $u$, we denote by $\bar u$ the mirror image of $u$, that is, the word obtained by reading $u$ from right to left. Let $G_2 = x_2x_1$, $I_2 = x_2x_1x_2$ and, for $m\ge 2$, $G_{m+1} = x_{m+1}\overline{G_m}$ and $I_{m+1} = G_{m+1}x_{m+1}\overline{I_m}$. Finally, let $\phi$ be the substitution given by
\begin{align*}
&\phi(x_1) = (x_1^\omega x_2^\omega x_1^\omega)^\omega,\quad \phi(x_2) = x_2^\omega, \\
&\textrm{and, for $m \ge 2$,}\quad\phi(x_{m+1}) = (x_{m+1}^\omega \phi(G_m\overline{G_m})^\omega x_{m+1}^\omega)^\omega.
\end{align*}

\begin{proposition}\label{identities Rm Lm}
$\R{m}$ (resp.\ $\L{m}$) is the class of finite monoids satisfying $(xy)^\omega x (xy)^\omega = (xy)^\omega$ and $\phi(G_m) = \phi(I_m)$ (resp.\ $\phi(\overline{G_m}) = \phi(\overline{I_m})$.
\end{proposition}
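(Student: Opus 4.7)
The plan is to proceed by induction on $m$, exploiting the inductive definition of the hierarchy via Mal'cev products: $\R{1} = \L{1} = \J$, $\R{m+1} = \K\malcev\L{m}$ and $\L{m+1} = \D\malcev\R{m}$. Throughout, the inclusion $\R{m}\cup\L{m} \subseteq \DA$ makes the identity $(xy)^\omega x(xy)^\omega = (xy)^\omega$ automatic, so the real content is the extra identity. By the symmetry of the definitions of $G_m,I_m$ versus $\overline{G_m},\overline{I_m}$, it is enough to handle the $\R{m}$-case; the $\L{m}$-case follows by mirroring.

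For the base case $m = 2$, we have $G_2 = x_2x_1$ and $I_2 = x_2x_1x_2$, so the identity to be verified is
\begin{equation*}
  x_2^\omega\bigl(x_1^\omega x_2^\omega x_1^\omega\bigr)^\omega \;=\; x_2^\omega\bigl(x_1^\omega x_2^\omega x_1^\omega\bigr)^\omega x_2^\omega .
\end{equation*}
In any $\DA$-monoid, both sides lie in the regular $\greenJ$-class of elements whose content is $\{x_1,x_2\}$, and regular $\greenJ$-classes of aperiodic $\DA$-monoids are rectangular bands. A short computation then shows that the identity above is equivalent, in $\DA$, to the standard $\R{}$-identity $(xy)^\omega x = (xy)^\omega$, yielding the characterization for $m=2$.

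For the inductive step from $\L{m}$ to $\R{m+1}$, the Mal'cev product description gives $M \in \R{m+1}$ iff $M/{\sim_\K} \in \L{m}$, and by the induction hypothesis the latter is equivalent to $M/{\sim_\K}$ satisfying $\phi(\overline{G_m}) = \phi(\overline{I_m})$ (the $\DA$ identity being preserved under the quotient). The core step is to verify that this translates, back in $M$, to the single identity $\phi(G_{m+1}) = \phi(I_{m+1})$. By construction $G_{m+1} = x_{m+1}\overline{G_m}$ and $I_{m+1} = G_{m+1}x_{m+1}\overline{I_m}$, while $\phi(x_{m+1}) = \bigl(x_{m+1}^\omega\,\phi(G_m\overline{G_m})^\omega\,x_{m+1}^\omega\bigr)^\omega$. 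Setting $e = \phi(x_{m+1})$ under an arbitrary evaluation, one checks that $e$ is an idempotent of $M$ whose content includes all the variables appearing in $\phi(\overline{G_m})$ and $\phi(\overline{I_m})$, so that in the $\DA$-monoid $M$ both $e\phi(\overline{G_m})$ and $e\phi(\overline{I_m})$ are $\greenJ$-equivalent to $e$. Consequently, by the very definition of $\sim_\K$, the equality $e\phi(\overline{G_m}) = e\phi(\overline{I_m})$ in $M$ is equivalent to $\phi(\overline{G_m}) \sim_\K \phi(\overline{I_m})$; unwinding the substitution, this is precisely $\phi(G_{m+1}) = \phi(I_{m+1})$. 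The symmetric step from $\R{m}$ to $\L{m+1}$ uses $\D\malcev$ and the dual condition.

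The main obstacle is the careful verification, inside $\DA$, that the idempotent $e = \phi(x_{m+1})$ has content rich enough to force both $e\phi(\overline{G_m})$ and $e\phi(\overline{I_m})$ into the $\greenJ$-class of $e$; this is what lets the $\sim_\K$-disjunction collapse to the equality case and converts the inductive hypothesis into a plain identity. The special shape of $\phi(G_m\overline{G_m})$ as a palindromic factor that visits every variable $x_1,\ldots,x_m$ on each side is designed precisely for this purpose, and tracking this through the iterated definition of $\phi$ is the only nontrivial bookkeeping in the argument.
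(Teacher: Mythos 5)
First, a point of reference: the paper does not actually prove this proposition --- it is imported from Kufleitner and Weil \cite[Thms.~2.1 and~3.5]{kw10sf} --- so there is no internal proof to compare yours against. Your overall plan (induction along the Mal'cev-product definitions $\R{m+1} = \K\malcev\L{m}$, $\L{m+1}=\D\malcev\R{m}$, with the substitution $\phi$ designed to encode the quantification over idempotents hidden in $\sim_\K$ and $\sim_\D$) is indeed the strategy behind the cited theorems. But as written the argument has gaps precisely where the work lies.

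The most concrete problem is the ``unwinding'' step. Since $I_{m+1} = G_{m+1}x_{m+1}\overline{I_m}$, the identity to be matched is
\[
\phi(x_{m+1})\,\phi(\overline{G_m}) \;=\; \phi(x_{m+1})\,\phi(\overline{G_m})\,\phi(x_{m+1})\,\phi(\overline{I_m}),
\]
not $e\,\phi(\overline{G_m}) = e\,\phi(\overline{I_m})$ with $e=\phi(x_{m+1})$ as you state; bridging these two forms requires a nontrivial computation in $\DA$ that you have not supplied. Second, and more seriously, $\sim_\K$ quantifies over \emph{all} idempotents of $M$, whereas your evaluations only produce idempotents of the special shape $e=\bigl(f\,\phi(G_m\overline{G_m})^\omega f\bigr)^\omega$ with $f=x_{m+1}^\omega$. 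The implication ``identity holds for all evaluations $\Rightarrow$ $\phi(\overline{G_m})\sim_\K\phi(\overline{I_m})$'' therefore needs a proof that testing against this one-parameter family suffices, i.e.\ that the condition at an arbitrary idempotent $f\in E(M)$ with $f\phi(\overline{G_m})\Jeq f$ can be transferred to some $e$ of the above form. That transfer is the crux of the cited result and is absent here; the content/$\greenJ$-class observation you single out as the main obstacle is in fact the easy half, since it only rules out the $\Jl$ branch of the $\sim_\K$ disjunction for your particular $e$ and gives the implication from $\sim_\K$ to the equality, not the converse. Finally, in the base case the nontrivial direction is that a $\DA$-monoid satisfying $x_2^\omega(x_1^\omega x_2^\omega x_1^\omega)^\omega = x_2^\omega(x_1^\omega x_2^\omega x_1^\omega)^\omega x_2^\omega$ is $\greenR$-trivial, and ``a short computation in a rectangular band'' is an assertion, not an argument. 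So the proposal is a reasonable road map for the theorem of \cite{kw10sf}, but not a proof of it.
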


Straubing \cite{str11csl} and Kuf\-leitner and Lauser \cite[Cor. 3.4]{2012:KufleitnerLauser-fragments} established, by different means, that for each $m\ge 1$, the class of $\FO^2_m[{<}]$-definable languages forms a variety of languages, and we denote by $\VarFO^2_m$ the corresponding pseudovariety. In particular, $\VarFO^2_1 = \Jtrivial$. Our strategy to establish the decidability of $\FO^2_m[{<}]$-definability, is to establish the decidability of membership in $\VarFO^2_m$.

It is to be noted that neither Straubing's result, nor Kuf\-leitner's and Lauser's result implies the decidability of $\VarFO^2_m$. Straubing's result is the following \cite[Thm. 4]{str11csl}.

\begin{theorem}
For $m\ge 1$, $\VarFO^2_{m+1} = \VarFO^2_m \dast \Jtrivial$, where $\dast$ denotes the two-sided wreath product.
\end{theorem}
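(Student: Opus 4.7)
The plan is to invoke the block product principle---the general correspondence between two-sided wreath products of pseudovarieties and logical composition---and to specialize it to the two-variable setting. For pseudovarieties $\V$ and $\W$, this principle characterizes the languages recognized by a monoid in $\V \dast \W$ as those definable by Boolean combinations of formulas in which an ``outer'' $\W$-controlled quantification is composed with ``inner'' $\V$-controlled predicates. Since $\Jtrivial$ corresponds exactly to piecewise testable languages, i.e.\ the $\FO^2_1[{<}]$-definable languages (see Remark~\ref{remark J}), taking $\W = \Jtrivial$ should amount to adding precisely one extra block of quantifiers on top of the $\FO^2_m[{<}]$-definable languages, yielding the claimed equality.

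\textbf{Forward inclusion $\VarFO^2_{m+1} \subseteq \VarFO^2_m \dast \Jtrivial$.} I would take a sentence $\varphi \in \FO^2_{m+1}[{<}]$, peel off its outermost block (binding, say, $y$), and study the resulting $\FO^2_m[{<}]$-formula $\psi(y)$. The crucial observation is that in the two-variable setting, once $y$ is fixed, $\psi(y)$ rewrites as a Boolean combination of $\FO^2_m[{<}]$-assertions about the prefix $u_{<y}$ and the suffix $u_{>y}$. The outer block then quantifies over positions classified into finitely many piecewise testable ``splitter types'', and this classification is exactly what a $\Jtrivial$-monoid computes; the resulting architecture is by construction the block product of a $\VarFO^2_m$-monoid with a $\Jtrivial$-monoid.

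\textbf{Reverse inclusion $\VarFO^2_m \dast \Jtrivial \subseteq \VarFO^2_{m+1}$.} A language recognized by a monoid in $\V \dast \Jtrivial$ admits a canonical expression as a Boolean combination of marked products $L_0 a_1 L_1 \cdots a_k L_k$, where the marker pattern $a_1 \cdots a_k$ is piecewise testable and each factor $L_i$ is recognized by $\V$. By induction, each $L_i$ is $\FO^2_m[{<}]$-definable, and the piecewise testable marker is $\FO^2_1[{<}]$-definable. Combining them under a single outer alternating block of quantifiers---existentially guessing the marker positions, universally certifying their piecewise testable pattern, and asserting $\FO^2_m[{<}]$-membership of each factor---produces an $\FO^2_{m+1}[{<}]$-sentence.

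\textbf{Main obstacle.} The principal difficulty is honouring the two-variable constraint throughout. Naive block product translations introduce fresh position variables in order to locate inner predicates with respect to the marker positions, which would push the definition into $\FO^3[{<}]$. Circumventing this requires showing that the inner $\FO^2_m[{<}]$-formulas admit a relativization to the prefix or to the suffix of a marker position which reuses the second variable rather than introducing a third; this is ultimately a combinatorial argument aligning the ranker semantics of Theorem~\ref{WI 2009} with the splitter decomposition above, and is where the bulk of the technical work of the proof resides.
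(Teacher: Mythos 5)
First, a point of reference: the paper does not prove this theorem at all --- it is quoted from Straubing \cite[Thm.~4]{str11csl} as imported background --- so your attempt can only be compared with Straubing's own proof. Your general framework (the block product/substitution principle specialised to two variables, with the ``relativize without a third variable'' issue correctly singled out as the technical crux) is indeed the framework of that proof. But your decomposition is inverted, and this is a genuine gap. In $\VarFO^2_m \dast \Jtrivial$ it is the right-hand factor $\Jtrivial$ that acts as the transducer decorating each position with (piecewise testable) information about its prefix and its suffix, while the left-hand factor $\VarFO^2_m$ reads the decorated word. On the logic side, $\Jtrivial$ therefore absorbs the \emph{innermost} quantifier block --- a single block of like quantifiers with one free variable expresses exactly piecewise testable conditions on the prefix and suffix of that position --- and the remaining outer $m$ blocks become an $\FO^2_m[{<}]$ formula over the enriched alphabet. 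You instead peel off the \emph{outermost} block, let $\Jtrivial$ handle that outer quantification, and let $\VarFO^2_m$ compute the position types; under the standard convention this architecture yields a containment in $\Jtrivial \dast \VarFO^2_m$. Since the two-sided wreath product is neither commutative nor associative, that is not the statement to be proved, and no argument is offered to pass from one to the other.

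The reverse inclusion has a second, independent problem: it rests on a false premise. Languages recognized by a monoid in $\V \dast \Jtrivial$ do \emph{not} admit a canonical expression as Boolean combinations of marked products $L_0a_1L_1\cdots a_kL_k$ with $\V$-recognizable factors; marked-product decompositions are the signature of Sch\"utzenberger-type and Mal'cev-type products (the paper uses $\K\malcev\V$ and $\D\malcev\V$ for precisely that purpose), whereas the block product is governed by the substitution principle: Boolean combinations of $\Jtrivial$-recognizable languages and of inverse images, under the $\Jtrivial$-transduction, of $\VarFO^2_m$-recognizable languages over the enriched alphabet. Moreover, even granting your decomposition, the formula you sketch (``existentially guessing the marker positions, universally certifying their pattern, asserting membership of each factor'') would not stay within one additional quantifier block, nor within two variables. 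The correct argument substitutes, for each enriched-letter predicate, a single-block two-variable formula relativized to the prefix (resp.\ suffix) of the current position expressing its image in the $\Jtrivial$-monoid; this adds exactly one innermost block and is where the relativization work you correctly anticipate actually takes place.
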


We refer the reader to \cite{str11csl} for the definition of the two-sided wreath product, which is also called the block product in the literature. As discussed by Straubing, this exact algebraic characterization of $\VarFO^2_m$ implies the decidability of $\VarFO^2_2$ but not of the other levels of the hierarchy. Straubing however conjectured that the following holds \cite[Conj.~10]{str11csl}.

\begin{conjecture}[Straubing] \label{straubingconjecture}
Let $u_1 = (x_1x_2)^\omega$, $v_1 = (x_2x_1)^\omega$ and, for $m\ge 1$,
\begin{align*}
u_{m+1} &= (x_1\cdots x_{2n}x_{2n+1})^\omega u_n (x_{2n+2}x_1\cdots x_{2n})^\omega \\
v_{m+1} &= (x_1\cdots x_{2n}x_{2n+1})^\omega v_n (x_{2n+2}x_1\cdots x_{2n})^\omega.
\end{align*}
Then a monoid is in $\VarFO^2_m$ if and only if it satisfies $x^{\omega+1} = x^\omega$ and $u_m = v_m$.
\end{conjecture}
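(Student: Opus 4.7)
The plan is to derive this conjecture from the paper's announced main theorem, namely $\VarFO^2_m = \R{m+1} \cap \L{m+1}$, combined with the equational description of $\R{k}$ and $\L{k}$ provided by Proposition~\ref{identities Rm Lm}. Since $\VarFO^2_m \subseteq \Ap$, the identity $x^{\omega+1} = x^\omega$ automatically holds in $\VarFO^2_m$, so only the equality $u_m = v_m$ is at stake.

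First I would show that every monoid in $\R{m+1} \cap \L{m+1}$ satisfies $u_m = v_m$, by induction on $m$. The base case $m=1$ reduces to $(x_1 x_2)^\omega = (x_2 x_1)^\omega$ in $\J = \R{2} \cap \L{2}$, which holds because $(xy)^\omega$ and $(yx)^\omega$ are $\greenJ$-equivalent idempotents and hence equal in any $\greenJ$-trivial monoid. For the inductive step, note that $u_{m+1}$ and $v_{m+1}$ differ only in their middle factor ($u_m$ versus $v_m$) and are sandwiched between the same idempotent prefix $(x_1\cdots x_{2m+1})^\omega$ and suffix $(x_{2m+2}x_1\cdots x_{2m})^\omega$. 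Using the Mal'cev-product definitions $\R{m+2} = \K \malcev \L{m+1}$ and $\L{m+2} = \D \malcev \R{m+1}$, together with the descriptions of $\sim_\K$ and $\sim_\D$ recalled in \refsec{sec: preliminaries}, one checks that conjugating by these idempotents preserves the equality from the induction hypothesis, so $u_{m+1} = v_{m+1}$ holds in $\R{m+2} \cap \L{m+2}$.

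Second, and this is the substantial direction, I would show that any finite aperiodic monoid $M$ satisfying $u_m = v_m$ already lies in $\R{m+1} \cap \L{m+1}$. By Proposition~\ref{identities Rm Lm}, it suffices to derive from $u_m = v_m$ (and aperiodicity) the DA identity $(xy)^\omega x(xy)^\omega = (xy)^\omega$ together with $\phi(G_{m+1}) = \phi(I_{m+1})$ and $\phi(\overline{G_{m+1}}) = \phi(\overline{I_{m+1}})$. The DA identity is obtained by a straightforward specialization of $u_1 = v_1$ combined with aperiodicity. For the two remaining identities, I would construct, by induction on $m$, an explicit substitution $\sigma_m$ of the variables $x_1, \ldots, x_{2m}$ by words over the alphabet of $G_{m+1}$ and $I_{m+1}$ such that, modulo aperiodicity, $\sigma_m(u_m)$ reduces to $\phi(G_{m+1})$ and $\sigma_m(v_m)$ reduces to $\phi(I_{m+1})$; the mirror identity is then handled symmetrically by a dual substitution $\overline{\sigma_m}$.

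The main obstacle is this last syntactic translation between the two families of identities. The recursion defining $G_m, I_m$ is phrased through mirror images and the nonlinear substitution $\phi$, whereas the recursion in $u_m, v_m$ is a plain idempotent sandwiching; matching them requires careful bookkeeping of variables and repeated use of aperiodicity to absorb the many redundant factors produced by $\phi$. The construction of $\sigma_m$ would exploit the parallel between the outer wrapping in $u_{m+1}/v_{m+1}$ and the outer $x_{m+1}^\omega \cdots x_{m+1}^\omega$ pattern in $\phi(x_{m+1}) = (x_{m+1}^\omega \phi(G_m\overline{G_m})^\omega x_{m+1}^\omega)^\omega$, so that the discriminating information in $G_{m+1}$ versus $I_{m+1}$ is inherited from the discriminating information in $u_m$ versus $v_m$ under the induction hypothesis.
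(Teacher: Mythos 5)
The first thing to note is that the paper does not prove this statement: it is presented as an open conjecture of Straubing, and the authors say explicitly in their conclusion that ``Our results do not confirm this conjecture, which it would be interesting to settle.'' There is therefore no paper proof to compare against, and your proposal must stand entirely on its own. It does not. The substantial direction --- that an aperiodic monoid satisfying $u_m = v_m$ lies in $\R{m+1}\cap\L{m+1}$, which you propose to obtain by deriving $\phi(G_{m+1}) = \phi(I_{m+1})$ and its mirror from $u_m = v_m$ --- is precisely the part you leave as a description of what a substitution $\sigma_m$ ``would'' do. You give no definition of $\sigma_m$, no verification that $\sigma_m(u_m)$ and $\sigma_m(v_m)$ actually reduce to the two sides of the target identity, and no argument that the nonlinear, mirror-image recursion defining $\phi(G_m)$ and $\phi(I_m)$ can be matched against the plain idempotent sandwiching in $u_m, v_m$. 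This is not bookkeeping; it is the open content of the conjecture, and the fact that the authors could not settle it with exactly the tools you invoke (Theorem~\ref{thm:fo2m} together with Proposition~\ref{identities Rm Lm}) is strong evidence that no routine translation exists.

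The easier direction also has a gap as written. For the inductive step you take $M \in \R{m+2}\cap\L{m+2}$ and appeal to the Mal'cev-product definitions, but these only give $M/{\sim_\K} \in \L{m+1}$ and $M/{\sim_\D} \in \R{m+1}$; neither quotient is known to lie in the intersection $\R{m+1}\cap\L{m+1}$ to which your induction hypothesis applies, so ``the equality from the induction hypothesis'' is not available in either quotient. You would need a finer hypothesis (for instance, separate identities valid in $\R{m+1}$ alone and in $\L{m+1}$ alone, in the spirit of Proposition~\ref{identities Rm Lm}) before the conjugation-by-idempotents step can even be formulated. The base case in $\J$ and the observation that $x^{\omega+1}=x^\omega$ holds throughout $\DA$ are fine, but everything that makes this conjecture a conjecture is still missing.
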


\noindent If established, this conjecture would prove the decidability of each $\VarFO^2_m$.
The authors on the other hand proved the following \cite[Thm.~5.1]{kw11LMCS}.

\begin{theorem}\label{main result from LMCS}
If a language $L$ is
recognized by a monoid in the join $\R{m} \join \L{m}$, then $L$ is
definable in $\FO^2_m[{<}]$; and if $L$ is definable in
$\FO^2_m[{<}]$, then $L$ is recognized by a monoid in $\R{m+1} \cap
\L{m+1}$.
\end{theorem}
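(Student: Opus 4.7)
The plan is to connect the algebraic notions in the theorem to $\FO^2_m[{<}]$ through the ranker-based congruences $\Right_{m,n}$, $\Left_{m,n}$, and $\equiv_{m,n}$, using Proposition~\ref{generate Rm Lm} on the algebraic side and Theorem~\ref{WI 2009} on the logical side. Both implications will then reduce to purely combinatorial refinement statements among these congruences.

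For the first implication, I would argue as follows. Suppose $L$ is recognized by some $M \in \R{m} \join \L{m}$. Since this join is a pseudovariety, the syntactic monoid $\Synt(L)$ also lies in $\R{m} \join \L{m}$, so it divides a product $M_R \times M_L$ with $M_R \in \R{m}$ and $M_L \in \L{m}$, both $A$-generated. By Proposition~\ref{generate Rm Lm}, $M_R$ is a quotient of $A^*/{\Right_{m,n}}$ and $M_L$ a quotient of $A^*/{\Left_{m,n}}$ for a common integer $n$, whence $\Synt(L)$ is a quotient of $A^*/({\Right_{m,n}} \cap {\Left_{m,n}})$. To conclude via Theorem~\ref{WI 2009} that $L \in \FO^2_m[{<}]$, it then suffices to establish a combinatorial refinement lemma: for every $n'$ there exists an $n$ such that ${\Right_{m,n}} \cap {\Left_{m,n}}$ refines $\equiv_{m,n'}$.

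For the second implication, suppose $L \in \FO^2_m[{<}]$. By Theorem~\ref{WI 2009}, $L$ is a union of $\equiv_{m,n}$-classes for some $n$, so $\Synt(L)$ is a quotient of $A^*/{\equiv_{m,n}}$. By Proposition~\ref{generate Rm Lm}, it suffices to exhibit an integer $n' = g(n)$ such that $\Right_{m+1,n'}$ refines $\equiv_{m,n}$, and symmetrically $\Left_{m+1,n'}$ refines $\equiv_{m,n}$: this exhibits $A^*/{\equiv_{m,n}}$ as a quotient of $A^*/{\Right_{m+1,n'}}$ and hence as an element of $\R{m+1}$, and symmetrically of $\L{m+1}$, and $\Synt(L)$ then inherits membership in $\R{m+1} \cap \L{m+1}$.

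The main obstacle is proving the two combinatorial refinement lemmas. The natural approach is induction on $m$, with the base case $m=1$ handled by Simon's theorem on piecewise-testable languages (see Remark~\ref{remark J}), where all three families of congruences cofinally generate the pseudovariety $\Jtrivial$. For the inductive step of the first lemma, knowing that $u$ and $v$ carry the same condensed $\X$- and $\Y$-rankers of bounded complexity should suffice to recover the order-type conditions required by $\equiv_{m,n'}$, since condensed rankers localize positions within a nested sequence of narrowing intervals (cf.\ Figure~\ref{fig: condensed}). For the second lemma, which I expect to be the more delicate one, the key insight is that an additional block in the rankers buys access to cross-direction order information: given rankers $r \in R^{\X}_{m,n}$ and $s \in R^{\Y}_{m,n-1}$, one constructs a suitable ranker in $R^{\X}_{m+1,g(n)}$ whose condensed-status on $u$ is controlled by the relative order of $r(u)$ and $s(u)$. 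Both lemmas are technical, primarily due to the careful bookkeeping of depth and block-count parameters through the inductive steps.
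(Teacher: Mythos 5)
The paper itself offers no proof of this statement: it is imported verbatim from the authors' earlier work \cite{kw11LMCS}, so the only in-paper material to compare against is the supporting infrastructure, namely Proposition~\ref{generate Rm Lm} and Theorem~\ref{WI 2009}, which you correctly identify as the bridge between the algebraic and logical sides. Your reduction of the \emph{second} implication is set up correctly: from $L$ being a union of $\equiv_{m,n}$-classes, producing $n'$ with ${\Right_{m+1,n'}} \subseteq {\equiv_{m,n}}$ and ${\Left_{m+1,n'}} \subseteq {\equiv_{m,n}}$ exhibits $A^*/{\equiv_{m,n}}$ as a quotient of both $A^*/{\Right_{m+1,n'}}$ and $A^*/{\Left_{m+1,n'}}$, hence as a member of $\R{m+1} \cap \L{m+1}$, which $\Synt(L)$ inherits. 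It is essential, as you state, that each one-sided congruence refines $\equiv_{m,n}$ on its own; refinement by the intersection ${\Right_{m+1,n'}} \cap {\Left_{m+1,n'}}$ alone would only place $A^*/{\equiv_{m,n}}$ in the join $\R{m+1}\join\L{m+1}$, a strictly weaker conclusion.

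The \emph{first} implication, however, has its refinement lemma pointing the wrong way. Proposition~\ref{generate Rm Lm} hands you a specific $n$ for which $L$ is a union of $({\Right_{m,n}} \cap {\Left_{m,n}})$-classes; to invoke Theorem~\ref{WI 2009} you then need an $n'$, depending on that $n$, such that ${\equiv_{m,n'}} \subseteq {\Right_{m,n}} \cap {\Left_{m,n}}$ (i.e., $\equiv_{m,n'}$ is the \emph{finer} relation). You assert the opposite inclusion with the opposite quantifier dependency. Your version does not support the deduction (a union of finer classes need not be a union of coarser ones), and it is in fact false for $m \ge 2$: it would place every $A^*/{\equiv_{m,n'}}$ in $\R{m}\join\L{m}$ and hence force $\VarFO^2_m = \R{m}\join\L{m}$, contradicting Theorem~\ref{thm:fo2m}, since the join sits strictly below $\R{m+1}\cap\L{m+1}$ in the Trotter--Weil hierarchy. (An alternative for this implication, and essentially the route of \cite{kw11LMCS}, bypasses $\equiv_{m,n'}$ entirely by writing an explicit $\FO^2_m[{<}]$ sentence expressing that a given $m$-block ranker is condensed, then using closure of $\FO^2_m[{<}]$ under Boolean operations.) Finally, note that the two combinatorial refinement lemmas you defer are not bookkeeping: they carry essentially the entire weight of the theorem, and your sketches --- in particular the claim that one additional block recovers all cross-direction order-type information required by $\equiv_{m,n}$ --- would need to be carried out in full, by induction on the block structure of rankers, before this constitutes a proof.
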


\section{The $\FO^2$ alternation hierarchy is decidable}\label{sec: main section}

We tighten the connection between the alternation hierarchy within $\FO^2[{<}]$ and the $\R{m}$-$\L{m}$-hierarchy and we prove the following result.

\begin{theorem}\label{thm:fo2m}
  A language $L \subseteq A^*$ is definable in $\FO^2_m[{<}]$ if and
  only if it is recognizable by a monoid in $\R{m+1} \cap \L{m+1}$.
\end{theorem}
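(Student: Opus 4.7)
The inclusion of $\FO^2_m[{<}]$-definable languages in those recognized by $\R{m+1}\cap\L{m+1}$ is given by the nontrivial part of \refthm{main result from LMCS}; the work is in the converse. Let $\varphi : A^* \to M$ be a morphism recognizing $L$ with $M \in \R{m+1}\cap\L{m+1}$, and let $\equiv_\varphi$ denote its kernel. By \refprp{generate Rm Lm} applied separately to $\R{m+1}$ and to $\L{m+1}$, there are integers $n_1,n_2$ such that $M$ is a quotient of both $A^*/{\Right_{m+1,n_1}}$ and $A^*/{\Left_{m+1,n_2}}$; equivalently, $\equiv_\varphi$ is coarser than each of $\Right_{m+1,n_1}$ and $\Left_{m+1,n_2}$, hence also coarser than their join $\Right_{m+1,n_1}\join\Left_{m+1,n_2}$. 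Combining this with \refthm{WI 2009}, it suffices to prove the following Key Lemma: \emph{for every $m\ge 1$ and every $n',n''$, there exists $n$ such that $\equiv_{m,n}$ refines $\Right_{m+1,n'}\join\Left_{m+1,n''}$.}

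I would prove the Key Lemma by induction on $m$. The base case $m=1$ reduces to Simon's theorem: by \refrem{remark J}, the congruence $\equiv_{1,n}$ defines piecewise testability of depth $n$, while the quotient $A^*/(\Right_{2,n'}\join\Left_{2,n''})$ lies in $\R{2}\cap\L{2}=\Jtrivial$ and is therefore itself piecewise testable; taking $n$ large enough that $\equiv_{1,n}$ distinguishes all the subwords needed to separate its classes concludes. For the inductive step, given $u \equiv_{m+1,n} v$ and a condensed ranker $r$ of up to $m+1$ blocks, I would decompose $r$ at its last block change as $r = r_1 r_2$, where $r_1$ has $m$ blocks and $r_2$ is a single block in the opposite direction. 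The position $r_1(u)$ and its ordering against the interval endpoints already fixed during the execution of $r_1$ are part of the data recorded by $\equiv_{m,n}$, so condensedness of $r$ on $u$ reduces to a finite Boolean combination of order comparisons between positions defined by $m$-block rankers. To connect $u$ and $v$ in $\Right_{m+1,n'}\join\Left_{m+1,n''}$, I would read off a factorization of $u$ (and of $v$) from the rankers captured by $\equiv_{m+1,n}$, apply the induction hypothesis to each factor, and splice the resulting pieces into intermediate words, in the spirit of Kl{\'\i}ma's argument \cite{kli11dm:short} combined with the substitution technique of \cite{kl11lics:short}.

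The main obstacle will be the depth and block bookkeeping: certifying condensedness of an $(m+1)$-block ranker on a substituted word requires control over orderings that are not directly visible in $\equiv_{m,n}$, and propagating this control through the substitution demands precise growth of $n$ as a function of $n',n''$ and of the sizes of the intermediate factors. The crux is to show that once the interval chain of a condensed $(m+1)$-block ranker is fixed, the final block is essentially determined by $m$-block ranker data together with order comparisons against shorter rankers, so that the substitution preserves condensedness up to the desired equivalence. This is also what forces the Key Lemma to be phrased in terms of the \emph{join} of $\Right$- and $\Left$-congruences rather than either one alone: proving $\equiv_{m,n}$ refines $\Right_{m+1,n'}$ on its own would incorrectly yield $\R{m+1}\subseteq\VarFO^2_m$, which is strictly stronger than the theorem and false beyond $m=1$.
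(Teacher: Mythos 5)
Your overall architecture is the same as the paper's and is sound: one direction is Theorem~\ref{main result from LMCS}; for the converse, by Proposition~\ref{generate Rm Lm} and Theorem~\ref{WI 2009} it suffices to show that some $\equiv_{m,n}$ refines the kernel of the recognizing morphism, and your Key Lemma is an equivalent combinatorial restatement of the paper's Proposition~\ref{prp:main}. Your base case ($m=1$, Simon's theorem, $\J=\R{2}\cap\L{2}$) is exactly the paper's. (One small slip: ``$\equiv_{m,n}$ refines $\Right_{m+1,n'}$'' already fails at $m=1$, not only beyond it, since $\R{}\not\subseteq\J$.)

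The gap is in the inductive step, and it is not mere bookkeeping. First, you never say where the factorization points of $u$ and $v$ come from; in the paper they are the $\greenR$-factorization of $u$ and the $\greenL$-factorization of $v$ \emph{relative to the morphism} $\varphi$ (hence at most $2\abs{M}$ of them), and it takes a genuine $\DA$-argument (Lemma~\ref{lem:da}) to show that these positions are reached by one-block rankers, so that $\equiv_{m,n+2\abs{M}}$ forces them to occur in the same order in $u$ and $v$. Second, and more seriously, applying the induction hypothesis to a factor gives you chains of words related by $\Right_{m,\cdot}$ and $\Left_{m,\cdot}$ \emph{one level down}, after which you must certify that the spliced full-length words are $\Right_{m+1,n'}$- or $\Left_{m+1,n''}$-equivalent, i.e., that the same $(m+1)$-block rankers are condensed on them. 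You flag this as ``the main obstacle,'' but it is the heart of the matter, and the paper never attempts it: it works with $\equiv_\varphi$ directly and descends through the Mal'cev product structure, passing from $M\in\R{m+1}\cap\L{m+1}$ to $M/{\sim_\D}\in\R{m}$ and $M/{\sim_\K}\in\L{m}$, so that $\Right_{m,n}$-equivalence of a factor already yields $\sim_\D$-equivalence of its image, which Lemma~\ref{lem:lifteq} upgrades to an actual equality in $M$ via the Green's-relations conditions built into the factorization; the induction hypothesis is used only for the interior factors, and even there the alphabet-preservation device ($M\times 2^A$) plus Lemma~\ref{lem:da} is needed to keep those conditions valid along the chain $w_1,\ldots,w_d$. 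Without some substitute for this algebraic descent, the purely combinatorial splicing you propose has no mechanism to control condensedness of $(m+1)$-block rankers on the intermediate words, so the induction does not close as sketched.
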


Theorem~\ref{thm:fo2m} immediately yields a decidability result.

\begin{corollary}\label{corollary decidability}
For each $m\ge 1$, it is decidable whether a given regular language $L$ is $\FO^2_m[{<}]$-definable. This decision can be achieved in \textsc{Logspace} on input the multiplication table of the syntactic monoid of $L$, and in \textsc{Pspace} on input its minimal automaton.

Moreover, given a $\FO^2[{<}]$-definable language $L$, one can compute the least integer $m$ such that $L$ is $\FO^2_m[{<}]$.
\end{corollary}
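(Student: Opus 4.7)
The plan is to reduce the question to membership testing in the pseudovarieties $\R{m+1} \cap \L{m+1}$ and to invoke the $\omega$-term identities from Proposition~\ref{identities Rm Lm}. First, Theorem~\ref{thm:fo2m}, combined with the Eilenberg correspondence recalled at the start of the variety approach subsection (a language is recognized by a monoid in $\V$ if and only if its syntactic monoid lies in $\V$), reduces $\FO^2_m[{<}]$-definability of $L$ to membership of the syntactic monoid $M(L)$ in $\R{m+1} \cap \L{m+1}$. By Proposition~\ref{identities Rm Lm}, this in turn amounts to verifying that $M(L)$ satisfies a fixed finite list of $\omega$-term identities, namely $(xy)^\omega x(xy)^\omega = (xy)^\omega$ together with $\phi(G_{m+1}) = \phi(I_{m+1})$ and $\phi(\overline{G_{m+1}}) = \phi(\overline{I_{m+1}})$.

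Given the multiplication table of $M(L)$, for each of these identities I would iterate over all assignments of its (bounded) set of variables to elements of $M(L)$, and evaluate both sides. Each assignment is described by a tuple of indices requiring $O(\log |M(L)|)$ space; each $\omega$-power is the unique idempotent among the powers of the corresponding element, and can be located by enumerating powers and testing for idempotency using the table. For fixed $m$, all these operations fit in \textsc{Logspace}, which gives the first complexity bound.

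When only the minimal automaton of $L$ is given, $M(L)$ may be exponentially larger; the key observation is that the elements of $M(L)$ are represented by transformations of the polynomial-sized state set of the automaton. Such transformations fit in polynomial space, and the operations needed by the identity-checking algorithm above (composition, idempotency test, enumeration of assignments, loop counters) all fit in \textsc{Pspace}. Running the same algorithm under this representation yields the second bound. For the \emph{moreover} statement, since $L$ being $\FO^2[{<}]$-definable implies $M(L) \in \DA = \bigcup_{m \ge 1} (\R{m+1} \cap \L{m+1})$, iterating over $m = 1, 2, 3, \ldots$ and applying the previous algorithm is guaranteed to terminate at the least $m$ for which $M(L)\in\R{m+1}\cap\L{m+1}$.

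The main obstacle I expect is the \textsc{Pspace} bound: transcribing the \textsc{Logspace} algorithm to the minimal-automaton setting requires care, because $M(L)$ is not explicitly available and its elements are only accessible through their action on states. One has to ensure that every step (enumeration of transformations assigned to the variables, nested evaluation of the deeply layered substitution~$\phi$, and computation of $\omega$-powers inside the transformation monoid) remains within polynomial space, using on-the-fly manipulation of transformations rather than any explicit tabulation of $M(L)$.
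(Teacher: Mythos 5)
Your proposal is correct and follows essentially the same route as the paper: reduce via Theorem~\ref{thm:fo2m} and the Eilenberg correspondence to membership of the syntactic monoid in $\R{m+1}\cap\L{m+1}$, and then decide that membership through the $\omega$-term identities of Proposition~\ref{identities Rm Lm}. The only difference is that the paper obtains the \textsc{Logspace}/\textsc{Pspace} bounds by citing Straubing and Weil's general result on checking $\omega$-term identities, whereas you sketch that identity-checking algorithm (enumeration of assignments, on-the-fly evaluation of $\omega$-powers, transformation representation over the minimal automaton) directly, which is exactly the content of the cited theorem.
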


\begin{proof}
We already observed that the $\R{m}$ and $\L{m}$ are decidable, and that each is described by two omega-term identities (Proposition~\ref{identities Rm Lm}). The decidability statement follows immediately. The complexity statement is a consequence of Straubing and Weil's~\cite[Thm.~2.19]{2012:StraubingWeil}. The computability statement follows immediately.
\qed
\end{proof}

We now turn to the proof of Theorem~\ref{thm:fo2m}. One implication was established in Theorem~\ref{main result from LMCS}. To prove the reverse implication, we prove \refprp{prp:main} below, which establishes that every language recognized by a monoid $M \in \R{m+1} \cap \L{m+1}$ is a union of $\equiv_{m,n}$-classes for some integer $n$ depending on $M$. Theorem~\ref{thm:fo2m} follows, in view of Theorem~\ref{WI 2009}.

\begin{proposition}\label{prp:main}
For every $m \geq 1$ and every morphism $\varphi\colon A^* \to M$ with $M \in \R{m+1} \cap \L{m+1}$ there exists an integer $n$ such that $\equiv_{m,n}$ is contained in $\equiv_\varphi$.
\end{proposition}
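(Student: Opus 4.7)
The plan is to prove Proposition~\ref{prp:main} by induction on $m$. For the base case $m = 1$, we have $\R{2} \cap \L{2} = \Jtrivial$, so $M$ is $\greenJ$-trivial. By Simon's theorem (Remark~\ref{remark J}), every language recognized by $M$ is piecewise testable and hence a union of $\equiv_{1,n}$-classes for some $n$; taking $n$ large enough to simultaneously accommodate the finitely many $\varphi$-classes of $M$ yields $\equiv_{1,n} \subseteq \equiv_\varphi$.

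For the inductive step, given $\varphi : A^* \to M$ with $M \in \R{m+1} \cap \L{m+1}$, the key tools are the Mal'cev decompositions $\R{m+1} = \K \malcev \L{m}$ and $\L{m+1} = \D \malcev \R{m}$, which give $M/{\sim_\K} \in \L{m}$ and $M/{\sim_\D} \in \R{m}$. The strategy is to reconstruct $\varphi(u)$ from the images of $u$ in these simpler quotients together with information about idempotent contexts ($eu$, $uf$) built into the definitions of $\sim_\K$ and $\sim_\D$. Concretely, given $u \equiv_{m,n} v$ with $n$ chosen large enough, one would use rankers with fewer than $m$ blocks to locate matching factorizations of $u$ and $v$ at positions where an idempotent of $M$ appears, re-encode the inner factors over a new alphabet indexed by the relevant idempotents (the substitution idea of Kl\'ima and Kufleitner-Lauser), and invoke an inductive version of the statement on the simpler variety. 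Combining such reductions on both sides (one using $\sim_\K$, one using $\sim_\D$) would then yield $\varphi(u) = \varphi(v)$.

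The main obstacle is the algebraic-combinatorial mismatch: the Mal'cev quotients $M/{\sim_\K}$ and $M/{\sim_\D}$ belong to $\L{m}$ and $\R{m}$ respectively, which are strictly larger than $\R{m} \cap \L{m}$, so the induction hypothesis cannot be applied to them directly. This likely forces a strengthening of the inductive statement to handle one-sided cases, probably via the congruences $\Right_{m,n}$ and $\Left_{m,n}$ from Proposition~\ref{generate Rm Lm}, or through a coordinated interleaving of the $\sim_\K$ and $\sim_\D$ reductions so that both sides stay synchronized at each level. A second delicate point is to verify that $\equiv_{m,n}$-equivalence, which tracks only rankers with up to $m$ blocks together with their pairwise order types, carries enough information to locate the required cut-points and to control the condensed $(m+1)$-block ranker behavior implicit in monoids of $\R{m+1} \cap \L{m+1}$; bridging this gap is where the combinatorial analysis of condensed rankers from \cite{kw11LMCS} and the Kl\'ima-style substitution technique must be brought together.
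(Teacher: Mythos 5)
Your skeleton matches the paper's: induction on $m$, base case from Simon's theorem via $\R{2}\cap\L{2}=\Jtrivial$, and the Mal'cev quotients $M/{\sim_\K}\in\L{m}$, $M/{\sim_\D}\in\R{m}$ as the engine of the inductive step. But the proposal stops exactly where the proof has to happen: you name the obstacle (the quotients live in the one-sided varieties $\L{m}$ and $\R{m}$, not in $\R{m}\cap\L{m}$) and then offer two speculative ways around it, neither carried out and neither the one that actually works. The paper's resolution is different from both of your suggestions. First, the one-sided quotients are handled without any new or strengthened induction: Proposition~\ref{generate Rm Lm} already says that an $A$-generated monoid of $\R{m}$ (resp.\ $\L{m}$) is a quotient of some $A^*/{\Right_{m,n}}$ (resp.\ $A^*/{\Left_{m,n}}$), so with $\rho=\pi_\D\circ\varphi$ and $\lambda=\pi_\K\circ\varphi$ one immediately gets $\Right_{m,n}\subseteq{\equiv_\rho}$ and $\Left_{m,n}\subseteq{\equiv_\lambda}$. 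Second, the induction hypothesis is applied one level down and to the \emph{join}: $A^*/({\equiv_\rho}\join{\equiv_\lambda})$ is a common quotient of $M/{\sim_\D}$ and $M/{\sim_\K}$, hence lies in $\R{m}\cap\L{m}$, and the induction hypothesis gives $\equiv_{m-1,n}\subseteq{\equiv_\rho}\join{\equiv_\lambda}$. Your proposal never mentions this join, which is the device that makes the two-sided induction hypothesis usable at all.

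Even granting that setup, the core of the inductive step is missing: one must show that $u\equiv_{m,n+2\abs{M}}v$ forces $\varphi(u)=\varphi(v)$. The paper does this by marking the special positions of the $\greenR$-factorization of $u$ and the $\greenL$-factorization of $v$ (reached by an $\X$-ranker and a $\Y$-ranker with fewer than $\abs{M}$ letters each, whence the $+2\abs{M}$ in the depth), factoring both words at all these positions, and replacing $v_i$ by $u_i$ one factor at a time while preserving $\equiv_\varphi$. Each replacement falls into one of three cases: next to a red position one uses $\Right_{m,\cdot}$-equivalence of the factors and Lemma~\ref{lem:lifteq} (which lifts $x\sim_\K y$ to $sx=sy$ in an $\greenR$-stable context); next to a green position, the dual; and in the interior one uses $\equiv_{m-1,n}$ together with a chain of words alternating between $\equiv_\rho$ and $\equiv_\lambda$, plus Lemma~\ref{lem:da} to keep the $\greenR$/$\greenL$-stability hypotheses alive along the chain --- this is where the preliminary alphabet-recording trick $\varphi'(u)=(\varphi(u),\Alpha(u))$ is needed. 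None of this substitution mechanics, nor the two lemmas that make it sound, appears in your proposal, so as it stands the inductive step is not proved.
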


Before we embark in the proof of \refprp{prp:main}, we record several algebraic and combinatorial lemmas.
 
\subsection{A collection of technical lemmas}

\begin{lemma}\label{lem:lifteq}
  Let $M$ be a finite monoid. If $s \Req sx$ and $x \sim_{\K} y$, then
  $sx = sy$. If $s \Leq xs$ and $x \sim_{\D} y$, then $xs = ys$.
\end{lemma}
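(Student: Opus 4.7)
The plan is to exploit the fact that $s \Req sx$ gives us a right inverse for $x$ on the right of $s$, which we can promote to an idempotent that stabilizes $s$. I would proceed as follows.

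First, from $s \Req sx$ I would pick $t \in M$ with $sxt = s$. Iterating, $s = s(xt)^k$ for all $k$, so setting $e = (xt)^\omega \in E(M)$ one obtains $se = s$. Next I need to check that $e$ and $ex$ are in the same $\greenJ$-class, in order to be in the ``equality'' branch of the $\sim_\K$ definition. On the one hand $ex = e \cdot x$ gives $ex \Rleq e$, hence $ex \Jleq e$; on the other hand $e = (xt)^{\omega+1} = (xt)^\omega \cdot xt = ex \cdot t$ gives $e \Rleq ex$, hence $e \Jleq ex$. Therefore $e \Jeq ex$, so we are not in the case $ex, ey \Jl e$ of the definition of $\sim_\K$. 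Since $x \sim_\K y$, this forces $ex = ey$. Multiplying by $s$ on the left and using $se = s$ yields $sx = sy$, as desired.

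The second statement is entirely symmetric. From $s \Leq xs$ I would pick $t$ with $txs = s$, consider the idempotent $f = (tx)^\omega$, and verify $fs = s$ (by the same iteration). Then I would check $f \Jeq xf$: one direction from $xf = x \cdot f$ giving $xf \Lleq f$, and the other from $f = tx(tx)^\omega = t \cdot xf$ giving $f \Leq xf$. Thus $x \sim_\D y$ applied at the idempotent $f$ forces $xf = yf$, and right-multiplying by $s$ gives $xs = ys$.

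The main obstacle, such as it is, is bookkeeping: remembering that the ``bad'' branch of $\sim_\K$ requires \emph{both} $ex$ and $ey$ to drop in the $\greenJ$-order, so I really only need to verify $ex \Jeq e$ (not $ey \Jeq e$) to land in the equality branch. Everything else is a direct application of the definitions and does not require any deep structural result about $M$.
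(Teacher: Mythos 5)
Your proof is essentially the paper's own argument: pick $t$ with $sxt=s$, pass to the idempotent $e=(xt)^\omega$ with $se=s$, observe $ex\Jeq e$ so that the definition of $\sim_\K$ forces $ex=ey$, and multiply by $s$ on the left (and dually for $\sim_\D$). One small slip: to get $e\Jleq ex$ you write $e=(xt)^{\omega+1}=ex\cdot t$, but the identity $(xt)^{\omega+1}=(xt)^\omega$ holds only in aperiodic monoids, whereas the lemma is stated for arbitrary finite $M$ (here $\omega$ is a fixed integer making all powers idempotent, not a formal aperiodicity exponent). The conclusion survives without it: since $e$ is idempotent, $e=e\cdot e=(xt)^\omega\,x\cdot t(xt)^{\omega-1}$, so $e\Rleq ex$ and hence $e\Jeq ex$ as needed; the same repair applies to $f=(tx)^\omega\Lleq xf$ in the second half. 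Your remark that only $ex\Jeq e$ (not $ey\Jeq e$) needs checking to land in the equality branch of $\sim_\K$ is exactly right.
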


\begin{proof}
  Let $z \in M$ such that $sxz = u$. We have $(xz)^\omega x \Jeq
  (xz)^\omega$.  Now, $x \sim_{\K} y$ implies $(xz)^\omega x =
  (xz)^\omega y$. Thus $sx = s(xz)^\omega x = s(xz)^\omega y = sy$.
  The second statement is left-right symmetric.
  \qed
\end{proof}

The following lemma illustrates an important structural property of monoids in $\DA$.

\begin{lemma}\label{lem:da}
Let $\phi\colon A^* \to M$, with $M \in \DA$ and let $x,y,z \in A^*$ such that $\phi(x) \Req \phi(xy)$ and $\Alpha(z) \subseteq \Alpha(y)$. Then $\phi(x) \Req \phi(xz)$.
\end{lemma}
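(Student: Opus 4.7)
The plan is to use the standard ``content absorption'' property of $\DA$: an idempotent whose defining product contains every letter of a word $v$ should absorb $\phi(v)$ on both sides. Concretely, starting from $\phi(x) \Req \phi(xy)$, pick $t \in M$ with $\phi(xy)\,t = \phi(x)$. Iterating gives $\phi(x) = \phi(x)(\phi(y)t)^k$ for every $k$, so setting $e = (\phi(y)t)^\omega$ produces an idempotent of $M$ with $\phi(x)\,e = \phi(x)$.

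The inclusion $\phi(xz) \Rleq \phi(x)$ is immediate since $\phi(xz) = \phi(x)\phi(z)$. For the reverse direction $\phi(x) \Rleq \phi(xz)$, I would reduce everything to the single claim
\[
e\,\phi(z)\,e = e.
\]
Given this identity, $\phi(xz)\,e = \phi(x)\phi(z)\,e = \phi(x)\,e\,\phi(z)\,e = \phi(x)\,(e\,\phi(z)\,e) = \phi(x)\,e = \phi(x)$, so $\phi(x)$ lies in $\phi(xz)\cdot M$, establishing $\phi(x) \Rleq \phi(xz)$.

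To prove the claim, I would invoke the following well-known fact about $\DA$, which can be stated separately: for every $M \in \DA$, every $\phi\colon A^* \to M$, every word $u$ and every word $v$ with $\Alpha(v) \subseteq \Alpha(u)$, we have $\phi(u)^\omega\,\phi(v)\,\phi(u)^\omega = \phi(u)^\omega$. To apply it, assume without loss of generality that $\phi$ is surjective (otherwise restrict to the image) and pick $w \in A^*$ with $\phi(w) = t$; then $e = \phi(yw)^\omega$, and $\Alpha(z) \subseteq \Alpha(y) \subseteq \Alpha(yw)$, so the content property yields $e\,\phi(z)\,e = e$ directly.

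The main obstacle is the content property itself. I would prove it by induction on $|v|$, the base $v = \varepsilon$ being trivial. In the inductive step write $v = v'a$ with $a \in \Alpha(u)$ and factor $u = u_1 a u_2$; by the inductive hypothesis one has analogous absorption for $v'$, and the $\DA$ identity $(\alpha\beta)^\omega \alpha (\alpha\beta)^\omega = (\alpha\beta)^\omega$, applied with the various splittings $\alpha = \phi(u_1), \beta = \phi(a u_2)$ and $\alpha = \phi(u_1 a), \beta = \phi(u_2)$, shows that $\phi(u)^\omega$ absorbs $\phi(u_1)$, $\phi(u_1 a)$, and ultimately $\phi(a)$ on both sides. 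Combining this single-letter absorption with the inductive hypothesis for $v'$ via $e = e \cdot e$ yields the required $e\,\phi(v)\,e = e$. This inductive step, rather than the reduction to it, is where the only real work lies.
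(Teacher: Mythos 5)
Your reduction is exactly the paper's: from $\phi(x) \Req \phi(xy)$ you extract an idempotent $e$ with $\phi(x)e = \phi(x)$ and reduce the lemma to the single identity $e\,\phi(z)\,e = e$, and the ``content absorption'' property you invoke ($\phi(u)^\omega\phi(v)\phi(u)^\omega = \phi(u)^\omega$ whenever $\Alpha(v)\subseteq\Alpha(u)$) is precisely what the paper extracts from the decomposition $\DA = \LI\malcev\Jone$, by applying the definition of $\sim_\LI$ with $e = f = \phi(yt)^\omega$. If you cite that property as known, your proof is complete and essentially identical to the paper's. (The surjectivity worry is shared by the paper, which also silently takes the $\greenR$-witness to be $\phi(t)$ for a word $t$; note, however, that your fix of ``restricting to the image'' is not automatic, since passing to a submonoid can refine $\greenR$ and thus lose the hypothesis $\phi(x)\Req\phi(xy)$.)

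The genuine gap is in the inductive proof of the content property, i.e.\ exactly the part you defer. Two steps do not close as sketched. First, from $e\phi(u_1)e = e$ and $e\phi(u_1a)e = e$ (your two splittings of $u$) one cannot directly conclude $e\phi(a)e = e$: these identities say that $e\phi(u_1)$ and $e\phi(u_1)\phi(a)$ are both $\greenR$-equivalent to $e$, but they give no license to strip $\phi(u_1)$ off the left. Second, and more seriously, the recombination ``via $e = e\cdot e$'' is illegitimate: $e\phi(v')\phi(a)e$ is not $\bigl(e\phi(v')e\bigr)\bigl(e\phi(a)e\bigr)$, and inserting the idempotent $e$ between $\phi(v')$ and $\phi(a)$ is an instance of the very absorption property being proved. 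Both assertions are true in $\DA$, but repairing them needs essentially the full strength of the local structure that the paper imports wholesale — namely that $\sim_\LI$ is a congruence with $M/{\sim_\LI}\in\Jone$, so that the whole product $\phi(v)$ (not just each letter) is absorbed in one step. So either cite the content property (or $\DA=\LI\malcev\Jone$) as known, in which case your argument is the paper's, or expect the induction to be substantially more delicate than the sketch suggests.
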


\begin{proof}
  The map $\Alpha\colon A^*\to \calP(A)$ can be seen as a morphism,
  where the product on $\calP(A)$ is the union operation. Since $M \in
  \DA$, we have $M/{\sim_\LI} \in \Jone$; let $\pi\colon M\to
  M/{\sim_\LI}$ be the projection morphism. It is easily
  verified that there exists a morphism $\psi\colon \calP(A)\to
  M/{\sim_\LI}$ such that $\psi\circ\Alpha = \pi\circ\phi$, see
  Figure~\ref{fig: lemma on DA}.
\begin{figure}\centering
    \begin{tikzpicture}[scale=0.1]
      \draw (0.0,-3.0) node (n0) {$A^*$};
      \draw (30.0,-3.0) node (n1) {$M$};
      \draw (0.0,-20.0) node (n3) {$\calP(A)$};
      \draw (30.0,-20.0) node (n4) {$M/{\sim_\LI}$};

      \draw[->] (n0) -- node[above] {$\phi$} (n1);
      \draw[->] (n1) -- node[right] {$\pi$} (n4);
      \draw[->] (n0) -- node[left] {$\Alpha$} (n3);
      \draw[->] (n3) -- node[below] {$\psi$} (n4);
    \end{tikzpicture}
    \caption{$M \in \DA = \LI\malcev \Jone$}\label{fig: lemma on DA}
\end{figure}

By assumption, $\phi(x) = \phi(xyt)$ for some $t\in A^*$, and hence
$\phi(x) = \phi(x)\phi(yt)^\omega$. Since $\Alpha((yt)^\omega) =
\Alpha((yt)^\omega z(yt)^\omega)$, we have $\phi(yt)^\omega \sim_\LI
\phi(yt)^\omega \phi(z) \phi(yt)^\omega$. Applying the definition of
$\sim_\LI$ with $e = f = \phi(yt)^\omega$, it follows that
$\phi(yt)^\omega = \phi(yt)^\omega \phi(z) \phi(yt)^\omega$ and we now
have
\begin{equation*}
  \phi(x) = \phi(x)\phi(yt)^\omega = \phi(x) \phi(yt)^\omega \phi(z) \phi(yt)^\omega = \phi(x) \phi(z) \phi(yt)^\omega.
\end{equation*}
Therefore $\phi(x) \Req \phi(xz)$, which concludes the proof.
\qed
\end{proof}

A proof of the following lemma can be found in~\cite[Prop.~3.6 and
Lem.~3.7]{kw11LMCS}.

\begin{lemma}\label{lem:combi}
  Let $m \geq 2$, $u,v \in A^*$, $a\in A$.
  \begin{enumerate}
  \item\label{bbb:combi} If $u \Right_{m,n} v$ and $u = u_{-} a u_{+}$
    and $v = v_{-} a v_{+}$ are $a$-left factorizations, then $u_{-}
    \Right_{m,n-1} v_{-}$ and $u_{+} \Right_{m,n-1} v_{+}$.
  \item\label{ccc:combi} If $u \Right_{m,n} v$ and $u = u_{-} a u_{+}$
    and $v = v_{-} a v_{+}$ are $a$-right factorizations, then $u_{-}
    \Right_{m,n-1} v_{-}$ and $u_{+} \Left_{m-1,n-1} v_{+}$.
  \end{enumerate}
  Dual statements hold for $u \Left_{m,n} v$.
\end{lemma}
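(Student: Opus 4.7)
The plan is to prove each of the half-congruences in the conclusion by an explicit \emph{lifting} of rankers: given a ranker $s$ in the class defining the target congruence (condensed on the appropriate half of $u$), I construct a companion ranker $\tilde s$ lying in the class $R^{\X}_{m,n} \cup R^{\Y}_{m-1,n-1}$ that defines the hypothesis $\Right_{m,n}$, such that $s$ is condensed on the half of $u$ if and only if $\tilde s$ is condensed on $u$ (and analogously for $v$). Once these lifts are in place, the hypothesis $u \Right_{m,n} v$ transfers the equivalence at the level of the full words, and hence from $u_\pm$ to $v_\pm$. The dual statements follow by applying the argument to the mirror images $\bar u$ and $\bar v$ and using $u \Left_{m,n} v \iff \bar u \Right_{m,n} \bar v$, together with the fact that the mirror map swaps $a$-left and $a$-right factorizations.

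\medskip\noindent\emph{Part~1, the $a$-left factorization.} Let $p_u$ be the first $a$-position of $u$; note that $a \notin \Alpha(u_-)$. For the right half $u_+$ and $s \in R^{\X}_{m,n-1}$ starting with $\X_b$, the lift $\tilde s = \X_a s$ sits in $R^{\X}_{m,n}$ (the leading $\X_a\X_b$ remains in a single block, and the depth grows by exactly one); the initial step lands at $p_u$ and the subsequent condensed window is $(p_u;|u|+1)$, which is naturally identified with $u_+$, so condensedness transfers in both directions. For $s \in R^{\Y}_{m-1,n-2}$ starting with $\Y_b$, the lift takes values in $R^{\Y}_{m-1,n-1}$: the extra block and depth budget---one unit more than that of $s$---is used to attach a forward-navigation prefix that confines the backward traversal of $s$ to $u_+$ and prevents leakage into $u_-$. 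For the left half $u_-$, where no ranker uses $\X_a$, one builds the lift by inserting $\X_a$ at a well-chosen position (appended at the end if $s$ starts with $\X$, prepended in combination with the leading $\Y_b$ if $s$ starts with $\Y$), thus forcing the condensed window on $u$ to close on the right at $p_u$.

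\medskip\noindent\emph{Part~2, the $a$-right factorization.} Here $a \notin \Alpha(u_+)$ and $p_u$ is the last $a$-position. The lift producing $u_- \Right_{m,n-1} v_-$ mirrors the right-half case of Part~1, with $\Y_a$ replacing $\X_a$. For $u_+ \Left_{m-1,n-1} v_+$, the key structural observation is that, since $a$ does not appear in $u_+$, the only way a condensed ranker on $u$ can enter $u_+$ is via a prefix $\Y_a$ that pins the right boundary at $p_u$; thereafter the navigation inside $u_+$ is necessarily \emph{forward}. The outer $\Y_a$ contributes one block on $u$ that disappears when passing to $u_+$, which is precisely why the block count drops from $m$ to $m-1$ and why the induced congruence on $u_+$ is $\Left$ rather than $\Right$.

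\medskip\noindent The main obstacle is the design of the lifts for $\Y$-starting rankers on $u_+$ in Part~1 (and dually, $\X$-starting rankers on $u_+$ in Part~2): these are exactly the cases where the naive choice $\tilde s = s$ can fail, because a backward condensed ranker on $u$ may ``leak'' into $u_-$ via a letter occurring in $u_-$ but not in the relevant portion of $u_+$. Preventing such leakage requires a case analysis on the first few instructions of $s$ and a careful use of the extra block and depth unit provided by the class $R^{\X}_{m,n} \cup R^{\Y}_{m-1,n-1}$ to build an appropriate ``containment'' prefix. Once this augmentation is in place, the verification that the condensed-interval chains of $\tilde s$ on $u$ correspond bijectively to those of $s$ on the half is a routine bookkeeping exercise.
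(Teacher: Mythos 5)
The paper gives no proof of this lemma (it defers to \cite{kw11LMCS}, Prop.~3.6 and Lem.~3.7), so your attempt has to stand on its own. Your overall strategy --- relating condensedness on a factor to condensedness of modified rankers on the full word, with the $\Left$ statements obtained by mirror symmetry --- is the right one, and your lifts are correct in the ``opposite-direction'' cases: $\X_a s$ works for $\X$-starting $s$ on $u_+$ and for $\Y$-starting $s$ on $u_-$ in part~1, and dually $\Y_a s$ works in part~2 for $\Y$-starting $s$ on $u_-$ and $\X$-starting $s$ on $u_+$. But both ``same-direction'' cases --- exactly the ones you flag as the main obstacle --- are wrong as described, and no single companion ranker with an ``iff'' can repair them. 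Appending $\X_a$ at the \emph{end} of an $\X$-starting ranker forces nothing: windows are built left to right, so a final instruction cannot constrain the earlier execution. Concretely, for $u=baca$ (so $u_-=b$, $u_+=ca$) and $s=\X_c$, the ranker $\X_c\X_a$ is condensed on $u$ (its execution lies entirely inside $u_+$), while $\X_c$ is not condensed on $u_-=b$; your claimed equivalence fails. For $\Y$-starting rankers on $u_+$ in part~1, a ``forward-navigation prefix'' is self-contradictory (the resulting ranker starts with $\X$, hence cannot lie in $R^{\Y}_{m-1,n-1}$), and it zooms the wrong way anyway: by the window rule for a pair $\X_a\Y_b$, the ranker $\X_a s$ confines the rest of $s$ to $(x_0;\X_a(u,x_0))$, i.e.\ to $u_-$, not to $u_+$. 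What actually works in the same-direction cases is not a lift but a Boolean combination of conditions from the hypothesis class: for $\Y$-starting $s$ with first block $B$, $s$ is condensed on $u_+$ if and only if $s$ is condensed on $u$ \emph{and} $B\Y_a$ is condensed on $u$; for $\X$-starting $s$ with first block $\X_{b_1}\cdots\X_{b_k}$, $s$ is condensed on $u_-$ if and only if $s$ is condensed on $u$, each $b_i\ne a$, and $\X_{b_1}\cdots\X_{b_i}\Y_a$ is \emph{not} condensed on $u$ for every $i\le k$. The conjunctions, and especially the negations, are essential here and are absent from your proposal.

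In part~2 you also never treat the $\Y$-starting rankers of $R^{\Y}_{m-1,n-1}$ on $u_+$, which form the bulk of $\Left_{m-1,n-1}$; your remark that after a prefix $\Y_a$ ``the navigation inside $u_+$ is necessarily forward'' is false (after $\Y_a\X_c$ a ranker may turn around inside $u_+$), and that prefix only lifts the $\X$-starting rankers of $R^{\X}_{m-2,n-2}$. This omission is not minor, because that is precisely where the statement, as printed, breaks down for $m=2$: take $u=(ab)^{2n+1}$ and $v=(ab)^{2n}ba$. Every ranker of $R^{\X}_{2,n}$ advances by at most two positions per $\X$-instruction, so its execution is identical on $u$ and on $v$ (it never leaves the common prefix $(ab)^{2n}$), and every ranker of $R^{\Y}_{1,n-1}$ is condensed on both words, since for a single block condensed and defined coincide and every word of length at most $n-1$ is a subsequence of both; hence $u\Right_{2,n}v$. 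Yet the $a$-right factorizations give $u_+=b$ and $v_+=\epsilon$, and $\Y_b$ separates these in $\Left_{1,n-1}$. So for this case no lifting argument can succeed; a correct treatment has to follow the exact formulation of \cite{kw11LMCS} rather than the statement above. (Note that the proof of Proposition~\ref{prp:main} only ever invokes part~2 through its first conclusion $u_-\Right_{m,n-1}v_-$, so the paper itself is unaffected.)
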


\begin{lemma}\label{lem:1ranker}
  Let $m,n \geq 2$ and let $u = u_{-} a u_{+}$ and $v = v_{-} a v_{+}$
  be $a$-left factorizations.  If $u \equiv_{m,n} v$, then $u_{-}
  \equiv_{m-1,n-1} v_{-}$ and $u_{+} \equiv_{m,n-1} v_{+}$. A dual
  statement holds for the factors of the $a$-right factorizations of
  $u$ and $v$.
\end{lemma}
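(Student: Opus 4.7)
The plan is to directly verify each of the five conditions defining $\equiv_{m,n-1}$ on $(u_+, v_+)$ (and dually $\equiv_{m-1,n-1}$ on $(u_-, v_-)$), by lifting every ranker on a factor to a suitable ranker on the whole word and then invoking the matching clause of $u \equiv_{m,n} v$. Since in an $a$-left factorization $a \not\in \Alpha(u_-)$, the position $\X_a(u) = \abs{u_-}+1$ acts as the boundary between the two factors. For a ranker $r$ acting on $u_+$, set $\tilde r = \X_a r$ if $r$ starts with $\X$, and $\tilde r = r$ if $r$ starts with $\Y$; because $\X_a$ extends rather than creates an initial $\X$-block, this lift sends $R_{m,n-1}$ into $R_{m,n}$ with no change in block count. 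For a ranker $r$ acting on $u_-$, we use the dual lift ($\tilde r = r$ if $r$ starts with $\X$; $\tilde r = \X_a r$ if $r$ starts with $\Y$), which maps $R_{m-1,n-1}$ into $R^{\X}_{m,n}$; now the prepended $\X_a$ does create a new block, which accounts for the loss of one block on the $u_-$ side.

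A direct check using $a \not\in \Alpha(u_-)$ shows that if $r$ is defined on the relevant factor, then $\tilde r$ is defined on $u$ with $\tilde r(u)$ equal to the corresponding position in $u$ (shifted by $\abs{u_-}+1$ in the $u_+$ case). The main obstacle is the reverse implication: given only that $\tilde r$ is defined on $v$ and that $\tilde r(v)$ lies on the correct side of $\X_a(v)$, we cannot immediately conclude that $r$ is defined on $v_+$, since the execution of $\tilde r$ on $v$ could transiently leave the factor and come back. The remedy is to apply the order comparison to every prefix of $\tilde r$, not only to $\tilde r$ itself: each prefix belongs to $R^{\X}_{m,n}$ or $R^{\Y}_{m,n-1}$, and its order relation with $\X_a$ is controlled by one of the four order-type clauses of $\equiv_{m,n}$. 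From $u \equiv_{m,n} v$ we therefore conclude that every prefix position of $\tilde r$ lies on the correct side of the cut in $v$ exactly when it does in $u$; the simulation of $\tilde r$ on $v$ then tracks $r$ on $v_+$ (resp.\ $v_-$) step by step.

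Once definedness is established, the four order conditions in $\equiv_{m,n-1}$ on $(u_+, v_+)$ transfer from matching conditions in $\equiv_{m,n}$: for each combination $(\X,\Y), (\Y,\X), (\X,\X), (\Y,\Y)$ of starting modalities for $(r,s)$, the lifts $(\tilde r, \tilde s)$ fall into the corresponding pair of ranker classes, and the positions differ from those on the factor only by the known shift. The same argument with the dual lift yields $u_- \equiv_{m-1,n-1} v_-$; in that case all four combinations of starting modalities reduce to comparisons in the single clause matching $R^{\X}_{m,n}$ against $R^{\X}_{m-1,n-1}$, since the dual lifts always start with $\X$. The statement for $a$-right factorizations follows by left-right symmetry.
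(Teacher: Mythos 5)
Your proposal is correct and follows essentially the same route as the paper's proof: lift each ranker on a factor to a ranker on the whole word by prepending $\X_a$ exactly when this does not break the block count, characterize definedness on the factor by definedness of the lift together with order conditions on all its prefixes relative to $\X_a(u)$, and then transfer the order-type clauses through the corresponding pairs of lifted rankers. The only cosmetic difference is that you organize the case analysis around a uniform lift $\tilde r$ and observe explicitly that all four order comparisons on the $u_-$ side collapse into the single $R^{\X}_{m,n}$ versus $R^{\X}_{m-1,n-1}$ clause, which the paper verifies case by case.
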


\begin{proof}
We first show $u_{-} \equiv_{m-1,n-1}
  v_{-}$.  Consider a ranker $r \in R_{m-1,n-1}$, supposing first that $r
  \in R^{\X}_{m-1,n-1}$. Then $r$ is defined on $u_{-}$ if and only if
  $r$ is defined on $u$ and $\ord(r'(u),\X_a(u))$ is
  ${<}$ for every nonempty prefix $r'$ of $r$. By definition of $\equiv_{m,n}$, this is equivalent to $r$ being defined on $v_-$. If instead $r \in R^{\Y}_{m-1,n-1}$, then $r$ is defined on $u_{-}$ if and
  only if $\X_a r \in R_{m,n}$ is defined on $u$ and $\ord(\X_a
  r'(u),\X_a(u))$ is ${<}$ for every nonempty prefix $r'$ of $r$. Again, this is equivalent to $r$ being defined on $v_-$ since $u\equiv_{m,n} v$. Thus, the same rankers in
  $R_{m-1,n-1}$ are defined on $u_{-}$ and $v_{-}$.

Now consider rankers $r \in R^{\X}_{m-1,n-1}$ and $s \in
  R^{\Y}_{m-1,n-2}$, which we can assume to be defined on both $u_{-}$ and $v_{-}$. Then the order types
  induced by $r$ and $s$ on $u_{-}$ and $v_{-}$ are equal, since
  $\ord(r(u_{-}),s(u_{-})) = \ord(r(u),\X_a s(u)) = \ord(r(v),\X_a
  s(v)) = \ord(r(v_{-}),s(v_{-}))$ and $\X_as \in R^{\X}_{m,n-1}$.
  
The same reasoning applies if $r \in R^{\Y}_{m-1,n-1}$ and $s\in R^{\X}_{m-1,n-2}$ (resp.\ if $r \in R^{\X}_{m-1,n-1}$ and $s\in R^{\X}_{m-1,n-2}$, if $r \in R^{\Y}_{m-1,n-1}$ and $s \in R^{\Y}_{m-2,n-2}$) since in that case, $\ord(r(u_{-}),s(u_{-})) = \ord(\X_ar(u),s(u))$ (resp.\ $\ord(r(u),s(u))$, $\ord(\X_a r(u),\X_a s(u))$).
Therefore, $u_{-} \equiv_{m-1,n-1} v_{-}$.

We now verify that $u_{+}
  \equiv_{m,n-1} v_{+}$. The proof is very similar to the first part
  and deviates only in technical details.
  Consider a ranker $r \in R_{m,n-1}$, say, in $R^{\X}_{m,n-1}$. Then $r$ is defined on $u_{+}$ if and only if $\X_a
  r \in R_{m,n}$ is defined on $u$ and $\ord(\X_a r'(u),\X_a(u))$
  is~${>}$ for every nonempty prefix $r'$ of $r$. Again, this is equivalent to $r$ being defined on
  $v_{+}$ since $u \equiv_{m,n} v$. If instead $r \in R^{\Y}_{m,n-1}$, then $r$ is defined on
  $u_{+}$ if and only if $r$ is defined on $u$ and
  $\ord(r'(u),\X_a(u))$ is~${>}$ for every nonempty prefix $r'$ of
  $r$, which is equivalent to $r$ being defined on $v_+$. Thus, the same rankers in $R_{m,n-1}$ are defined on $u_{+}$ and $v_{+}$.

Now consider rankers $r \in R^{\X}_{m,n-1}$ and $s \in R^{\Y}_{m,n-2}$, both defined on $u_{+}$ and $v_{+}$. Then the order types induced by $r$ and $s$ on $u_{+}$ and $v_{+}$ are equal, since $\ord(r(u_{+}),s(u_{+})) = \ord(\X_a r(u),s(u))$ and $\X_ar \in R^{\X}_{m,n}$.

Again, a similar verification guarantees that the order types induced by $r$ and $s$ on $u_+$ and $v_+$ are equal also if $r \in R^{\Y}_{m,n-1}$ and $s \in R^{\X}_{m,n-2}$, or if $r \in R^{\X}_{m,n-1}$ and $s \in
  R^{\X}_{m-1,n-2}$, or if $r \in R^{\Y}_{m,n-1}$ and $s \in
  R^{\Y}_{m-1,n-2}$.  This shows $u_{+} \equiv_{m,n-1} v_{+}$ which completes the proof.
  \qed
\end{proof}

\begin{lemma}\label{lem:cross:r}
Let $m,n \geq 2$ and let $u = u_{-} a u_0 b u_{+}$ and $v = v_{-} a v_0 b v_{+}$ describe $b$-left and $a$-right factorizations (that is, $a \not\in \Alpha(u_0 b u_{+}) \cup \Alpha(v_0 b v_{+})$ and $b \not\in \Alpha(u_{-} a u_0) \cup \Alpha(v_{-} a v_0)$).  If $u \equiv_{m,n} v$, then $u_{0} \equiv_{m-1,n-1} v_{0}$.
\end{lemma}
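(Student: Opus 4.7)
The plan is to mimic the proof of \reflem{lem:1ranker}, exploiting the fact that $u_0$ and $v_0$ are bracketed by the last $a$-position and the first $b$-position of $u$ and $v$, respectively, and that neither $\Alpha(u_0)$ nor $\Alpha(v_0)$ contains $a$ or $b$. The key idea is to simulate any ranker $r$ acting on $u_0$ by a ranker on $u$ which first navigates to the appropriate boundary: prepend $\Y_a$ if $r$ starts with an $\X$-modality, and prepend $\X_b$ if $r$ starts with a $\Y$-modality. Since any ranker defined on $u_0$ only uses letters in $\Alpha(u_0)$, the simulated trajectory on $u$ coincides with the original trajectory on $u_0$ as long as it remains strictly between the last $a$-position and the first $b$-position of $u$.

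First I would verify that the same rankers in $R_{m-1,n-1}$ are defined on $u_0$ and on $v_0$. For $r \in R^{\X}_{m-1,n-1}$, the ranker $r$ is defined on $u_0$ if and only if, for every nonempty prefix $r'$ of $r$, the ranker $\Y_a r'$ is defined on $u$ and satisfies $\Y_a(u) < \Y_a r'(u) < \X_b(u)$. Each $\Y_a r'$ lies in $R^{\Y}_{m,n}$, since prepending $\Y_a$ to an $\X$-ranker adds exactly one block and one unit of depth. The definedness part transfers from $u$ to $v$ by the first clause of $\equiv_{m,n}$; the order-type condition against $\Y_a \in R^{\Y}_{m-1,n-1}$ transfers by the clause comparing $R^{\Y}_{m,n}$ with $R^{\Y}_{m-1,n-1}$; and the condition against $\X_b \in R^{\X}_{m,n-1}$ transfers by the clause comparing $R^{\Y}_{m,n}$ with $R^{\X}_{m,n-1}$. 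The case $r \in R^{\Y}_{m-1,n-1}$ is handled symmetrically by prepending $\X_b$ instead of $\Y_a$.

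Next I would verify each of the four order-type clauses in the definition of $\equiv_{m-1,n-1}$. For any $\X$-ranker $r$ defined on $u_0$, one has $\Y_a r(u) = r(u_0) + \abs{u_-a}$; symmetrically, $\X_b r(u) = r(u_0) + \abs{u_-a}$ for a $\Y$-ranker $r$. Thus the order type of two rankers on $u_0$ equals the order type of their prepended versions on $u$. For each of the four pairs of ranker classes occurring in $\equiv_{m-1,n-1}$, the corresponding prepended pair falls into a pair of classes already covered by one of the four order-type clauses of $\equiv_{m,n}$, so the equality of order types transfers from $(u,v)$ to $(u_0,v_0)$.

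The main obstacle is the block-and-depth bookkeeping: in each case one must check that the prepended $\Y_a$ or $\X_b$ constitutes a genuinely new block (automatic since $r$ starts with the opposite modality), that the resulting ranker lies in exactly the class demanded by the $\equiv_{m,n}$-clause invoked, and that the auxiliary rankers $\Y_a \in R^{\Y}_{1,1}$ and $\X_b \in R^{\X}_{1,1}$ sit in the classes required on the other side of the comparison; this last point relies on the hypothesis $m,n \geq 2$. Combining the definedness equivalence with the four order-type equalities then yields $u_0 \equiv_{m-1,n-1} v_0$.
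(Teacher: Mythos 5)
Your proposal is correct and follows essentially the same route as the paper's own proof: simulate a ranker on $u_0$ by prepending $\Y_a$ to $\X$-rankers and $\X_b$ to $\Y$-rankers, characterize definedness on $u_0$ via the prefix conditions $\Y_a(u) < \Y_a r'(u) < \X_b(u)$ (resp.\ the symmetric ones), and match each of the four order-type clauses of $\equiv_{m-1,n-1}$ to a clause of $\equiv_{m,n}$ after the shift by $\abs{u_-a}$. The block/depth bookkeeping you flag is exactly the content of the paper's verification, including the fact that the last two order-type cases are vacuous when $m=2$.
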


\begin{proof}
A ranker $r \in R^{\X}_{m-1,n-1}$ is defined on $u_{0}$ if and only if
  $\Y_a r \in R_{m,n}$ is defined on $u$ and $\ord(\Y_a
  r'(u),\Y_a(u))$ is~${>}$ and $\ord(\Y_a r'(u),\X_b(u))$ is ${<}$ for
  every nonempty prefix $r'$ of $r$. Similarly, a ranker $r \in
  R^{\Y}_{m-1,n-1}$  is defined on $u_{0}$ if and only if
  $\X_b r \in R_{m,n}$ is defined on $u$ and $\ord(\X_b
  r'(u),\Y_a(u))$ is~${>}$ and $\ord(\X_b r'(u),\X_b(u))$ is ${<}$ for
  every nonempty prefix $r'$ of $r$. Thus, if $u \equiv_{m,n} v$, then the same rankers in $R_{m-1,n-1}$ are defined on $u_0$ and $v_0$.

Now consider rankers $r \in R^{\X}_{m-1,n-1}$ and $s \in
  R^{\Y}_{m-1,n-2}$ (resp.\ $r \in R^{\Y}_{m-1,n-1}$ and $s \in
  R^{\X}_{m-1,n-2}$), defined on both $u_{0}$ and $v_{0}$. Then $\ord(r(u_{0}),s(u_{0})) =
  \ord(\Y_a r(u), \X_b s(u))$ (resp.\ $\ord(\X_b r(u), \Y_a s(u))$. Since $u \equiv_{m,n} v$, $\Y_a r \in R^{\Y}_{m,n}$ and
  $\X_b s \in R^{\X}_{m,n_1}$ (resp.\ $\X_b r \in R^{\X}_{m,n}$ and
  $\Y_a s \in R^{\Y}_{m,n_1}$), the order types defined by $r$ and $s$ on $u_0$ and $v_0$ are equal.
  
If $m = 2$, we are done proving that $u_0 \equiv_{m-1,n-1} v_0$. We now assume that $m\ge 3$.
Let $r \in R^{\X}_{m-1,n-1}$ and $s \in
  R^{\X}_{m-2,n-2}$ (resp.\ $r \in R^{\Y}_{m-1,n-1}$ and $s \in
  R^{\Y}_{m-2,n-2}$) be defined on both $u_{0}$ and $v_{0}$. Then
  $\ord(r(u_{0}),s(u_{0})) = \ord(\Y_a r(u),\Y_a s(u))$ (resp.\ $\ord(\X_b r(u),\X_b s(u))$). By the same reasoning as above, the order type defined by $v$ on $u_0$ and $v_0$ is the same since $\Y_a r \in
  R^{\Y}_{m,n}$ and $\Y_a s \in R^{\Y}_{m-1,n-1}$ (resp.\ $\X_b r \in
  R^{\X}_{m,n}$ and $\X_b s \in R^{\X}_{m-1,n-1}$).
This concludes the proof of the lemma.
  \qed
\end{proof}

\subsection{Proof of Proposition~\ref{prp:main}}

The proof is by induction on $m$. We already observed that $L$ is $\FO^2_1[{<}]$-definable if and only if it is piecewise testable, if and only if it is accepted by a monoid in $\J$. Since $\J = \R2\cap\L2$, Proposition~\ref{prp:main} holds for $m=1$. We now assume that $m\ge 2$.

Let $\phi\colon A^* \to M$ be a morphism with $M \in \R{m+1}\cap\L{m+1}$. We note that it suffices to prove Proposition~\ref{prp:main} for the morphism $\phi'\colon A^* \to M\times 2^A$ given by $\phi'(u) = (\phi(u),\Alpha(u))$. Observe that, for $u,v\in A^*$,
\begin{align}
& \phi'(u) \sim_\D \phi'(v) \textrm{ (resp.\ $\phi'(u) \sim_\K \phi'(v)$)}\quad\textrm{implies} \quad \Alpha(u) = \Alpha(v).
\end{align}
Indeed we have $\phi'(u)\phi'(u)^\omega = \phi'(u)^\omega$ (since $M$ is aperiodic): then $\phi'(u) \sim_\D \phi'(v)$ implies that $\phi'(v)\phi'(u)^\omega = \phi'(u)\phi'(u)^\omega$ and by definition of $\phi'$, $\Alpha(v)$ is contained in $\Alpha(u)$. By symmetry, $u$ and $v$ have the same alphabetical content and the same holds for $\sim_\K$.

To lighten up the notation, we dispense with the consideration of $\phi'$ and we assume that $\phi$ satisfies Property (1).

Let $\pi_{\D} :
  M \to M / {\sim_{\D}}$ and $\pi_{\K} : M \to M / {\sim_{\K}}$ be the
  natural morphisms. By definition of $\R{m+1}$ and $\L{m+1}$, we have $M / {\sim_{\D}} \in \R{m}$ and $M/{\sim_{\K}} \in \L{m}$. Let $\rho = \pi_{\D} \,\circ\,
  \varphi$ and $\lambda = \pi_{\K} \,\circ\,
  \varphi$, see Figure~\ref{fig: a commutative diagram}.  The monoid $A^* / ({\equiv_{\rho}} \join {\equiv_{\lambda}})$ is a
  quotient of both $M / {\sim_{\D}}$ and $M / {\sim_{\K}}$, so
  $A^* / ({\equiv_{\rho}} \join {\equiv_{\lambda}}) \in \R{m} \cap
  \L{m}$ and there exists $n \geq 1$ such that
  \begin{itemize}
  \item $\Right_{m,n}$ is contained in $\equiv_\rho$ and $\Left_{m,n}$ is contained in $\equiv_\lambda$ (by Proposition~\ref{generate Rm Lm}),
  \item $\equiv_{m-1,n}$ is contained in ${\equiv_{\rho}} \join
    {\equiv_{\lambda}}$ (by induction).
  \end{itemize}
\begin{figure}[t]  \begin{center}
    \begin{tikzpicture}[scale=0.75]
      \draw (0,0) node (A) {$A^*$};
      \draw (0,-2) node (M) {$M$};
      \draw (-2,-4) node (MD) {$M / {\sim_{\D}}$};
      \draw (2,-4) node (MK) {$M / {\sim_{\K}}$};
      \draw (0,-6) node (MDK) {$A^* / ({\equiv_{\rho}} \join {\equiv_{\lambda}})$};
      \draw (-5,-4) node (ARmn) {$A^* / {\Right_{m,n}}$};
      \draw (5,-4) node (ALmn) {$A^* / {\Left_{m,n}}$};
      \draw (7.5,-6) node (AEmn) {$A^* / {\equiv_{m-1,n}}$};

      \draw[->] (A) -- node[right] {$\varphi$} (M);
      \draw[->] (A) .. controls (-2,-2.5) .. node[left] {$\rho$} (MD);
      \draw[->] (A) -- (ARmn);
      \draw[->] (A) .. controls (2,-2.5) .. node[right] {$\lambda$} (MK);
      \draw[->] (A) -- (ALmn);
      \draw[->] (M) -- node[pos=0.35,left] {$\pi_{\D}$} (MD);
      \draw[->] (M) -- node[pos=0.35,right] {$\pi_{\K}$} (MK);
      \draw[->] (ARmn) -- (MD);
      \draw[->] (ALmn) -- (MK);
      \draw[->] (MD) -- (MDK);
      \draw[->] (MK) -- (MDK);
      \draw[->] (A) .. controls (7,-4) .. (AEmn);
      \draw[->] (AEmn) -- (MDK);
    \end{tikzpicture}
  \end{center}
  \caption{A commutative diagram}\label{fig: a commutative diagram}
\end{figure}
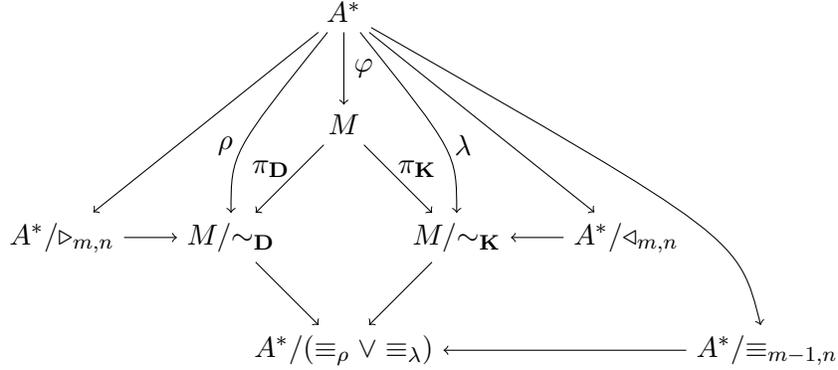

We show that $\equiv_{m,n+2\abs{M}}$ is contained in
  $\equiv_\varphi$. Let $u \equiv_{m,n+2\abs{M}} v$.
  Consider the $\greenR$-factorization of $u$, i.e., $u = s_1 a_1
  \cdots s_k a_k s_{k+1}$ with $a_i \in A$ and $s_i \in A^*$ such that
  $1 = \varphi(s_1)$ and for all $1 \leq i \leq k$:
  \begin{equation*}
    \varphi(s_1 a_1 \cdots s_i) 
    \Rg \varphi(s_1 a_1 \cdots s_i a_i) 
    \Req \varphi(s_1 a_1 \cdots s_i a_i s_{i+1}).
  \end{equation*}
  Since the number of $\greenR$-classes is at most $\abs{M}$, we have
  $k < \abs{M}$.  Similarly, let $v = t_1 b_1 \cdots t_{k'} b_{k'}
  t_{k'+1}$ with $b_i \in A$ and $t_i \in A^*$ be the
  $\greenL$-factorization of $v$ such that $\varphi(t_{k'+1}) = 1$ and
  for all $1 \leq i \leq k'$:
  \begin{equation*}
    \varphi(t_{i} b_i t_{i+1} \cdots b_{k'} t_{k'+1})
    \Leq \varphi(b_i t_{i+1} \cdots b_{k'} t_{k'+1})
    \Ll \varphi(t_{i+1} \cdots b_{k'} t_{k'+1}).
  \end{equation*}
  As before, we have $k' < \abs{M}$. By \reflem{lem:da} (applied with $x = s_1\cdots s_{i-1}a_{i-1}$, $y = s_i$ and $z = a_i$), we have $a_i
  \not\in \Alpha(s_i)$; and similarly, $b_i \not\in \Alpha(t_{i+1})$. Therefore,
  the positions of the $a_i$'s in $u$ are exactly the positions
  visited by the ranker $r = \X_{a_1} \cdots \X_{a_k}$, and the
  positions of the $b_i$'s in $v$ are exactly the positions visited by
  the ranker $s = \Y_{b_{k'}} \cdots \Y_{b_1}$. Since $u
  \equiv_{m,n+2\abs{M}} v$, each of the rankers $r$ and $s$ is defined
  on both $u$ and $v$, and all the positions visited by the rankers $r$
  and $s$ occur in the same order in $u$ as in $v$. We call these positions \emph{special}. Let
  \begin{align*}
    u &= u_1 c_1 \cdots u_\ell c_\ell u_{\ell+1} \\
    v &= v_1 c_1 \cdots v_\ell c_\ell v_{\ell+1}
  \end{align*}
  be obtained by factoring $u$ and $v$ at all the special positions. We have $\ell \leq k + k' < 2\abs{M}$. We say that a special position is
  \emph{red} if it is visited by $r$, and that it is \emph{green} if it is visited by
  $s$. Some special positions may be both red and green, which means that more
  than one of the cases below may apply.

  For $u$ the above factorization is a refinement of the
  $\greenR$-factorization; and for $v$ it is a refinement of the
  $\greenL$-factorization. In particular, $\varphi(u_1) = 1$,
  $\varphi(v_{\ell+1})=1$ and
  \begin{align}
    \varphi(u_1 \cdots u_{i-1} c_{i-1}) 
    &\Req \varphi(u_1 \cdots u_{i-1} c_{i-1} u_i) &&
    \textrm{for }1 < i \leq \ell+1, \tag{Eq($\Req$)}\\
    \varphi(v_i c_i v_{i+1} \cdots c_\ell)
    &\Leq \varphi(c_i v_{i+1} \cdots c_\ell) &&
    \text{for } 1 \leq i \leq \ell. \tag{Eq($\Leq$)}
  \end{align}
In order to prove $u \equiv_\phi v$, we show that we can gradually substitute $u_i$ for $v_i$ in the product $v_1c_1 \cdots v_\ell c_\ell v_{\ell+1} = v$, starting from $i = 1$, while maintaining $\equiv_\phi$-equivalence. Namely we show that, for each $i$, it holds
\begin{equation*}
u_1\cdots u_{i-1}c_{i-1} \, u_i \, c_iv_{i+1} \cdots v_{\ell+1} \equiv_\phi u_1\cdots u_{i-1}c_{i-1} \, v_i \, c_iv_{i+1} \cdots v_{\ell+1}.\tag{Eq(i)}
\end{equation*}

Let $h_0$ be the leftmost red position: then $c_{h_0} = a_1$ and $s_1 = u_1c_1\cdots u_{h_0}$. Since $\phi(s_1) = 1$ and $M$ is aperiodic, the $\phi$-image of every letter in $s_1$ is 1. Applying \reflem{lem:1ranker} to the $a_1$-left factorizations of $u$ and $v$, we find that $u_1c_1\cdots u_{h_0-1} \equiv_{m-1,n-1} v_1c_1 \cdots v_{h_0-1}$ and in particular, these words have the same alphabet. It follows that $\phi(u_i) = \phi(v_i) = 1$ for all $i \le h_0$, and hence (Eq($i$)) holds for all $i \le h_0$.

The right-left dual of this reasoning establishes that $\phi(u_i) = \phi(v_i) = 1$ for all the $u_i, v_i$ to the right of the last (rightmost) green position, say $j_0$. In particular, (Eq($i$)) also holds for all $i > j_0$.

We now assume that $h_0 < i \le j_0$ and we let $h-1$ be the first red position to the left of $i$ and $j$ be the first green position to the right of $i$: we have $h_0 < h \le i \le j\le j_0$.

\paragraph{Case 1: $h = i$ ($i-1$ is red)}
We have $u \Right_{m,n+2\abs{M}}
  v$.  By \reflem{lem:combi}\,(\ref{bbb:combi}), a
  sequence of at most $i-1$ left-factorizations yields $u_i c_i \cdots
  u_{\ell+1} \Right_{m,n+2\abs{M}-i+1} v_i c_i \cdots v_{\ell+1}$.  If
  $i$ is red, then by \reflem{lem:combi}\,(\ref{bbb:combi}), after
  one $c_i$-left-factorization, we see that $u_i
  \Right_{m,n+2\abs{M}-i} v_i$. If $i$ is not red, then $i$ is green and by
  \reflem{lem:combi}\,(\ref{ccc:combi}), after at most $\ell-i$
  right-factorizations, we find that $u_i$ and $v_i$ are
  $\Right_{m,n+2\abs{M}-i-(\ell-i)}$-equivalent. In any case, we have $u_i
  \Right_{m,n} v_i$ and thus $u_i \equiv_\rho v_i$ (i.e., $\phi(u_i) \sim_\D \phi(v_i)$) by the choice of
  $n$. In view of (Eq($\Leq$)), \reflem{lem:lifteq} now implies
\begin{equation*}
  u_i c_i v_{i+1} \cdots c_\ell v_{\ell + 1}\equiv_\varphi v_i c_i v_{i+1} \cdots c_\ell v_{\ell + 1}
\end{equation*}
and left multiplication by $u_1c_1\cdots c_{i-1}$ yields (Eq($i$)).

\paragraph{Case 2: $j=i$ ($i$ is green)}
As in Case~1, we see that $u_i
  \equiv_\lambda v_i$. (Eq($\Req$)) and \reflem{lem:lifteq} then imply

\begin{equation*}
  u_1 c_1 \cdots u_{i-1} c_{i-1} u_i \equiv_\varphi u_1 c_1 \cdots u_{i-1} c_{i-1} v_i,
\end{equation*}
and right multiplication by $c_iv_{i+1}\cdots v_{\ell+1}$ yields (Eq($i$)).

\paragraph{Case 3: $h < i < j$ ($i-1$ is not red and $i$ is not green)}
By \reflem{lem:1ranker}, after at most $h-1$ left factorizations and
  $\ell-j+1$ right factorizations, we obtain $u_h c_h \cdots u_j
  \equiv_{m,n+j-h} v_h c_h \cdots v_j$ (since $n+j-h \leq n +
  2\abs{M} - (h - 1) - (\ell - j + 1)$). \reflem{lem:cross:r}, applied with
  $a = c_{i-1}$ and $b = c_i$, then yields $u_i \equiv_{m-1,n} v_i$.  Since $\equiv_{m-1,n}$ is contained in $\equiv_\lambda\vee\equiv_\rho$, there exist words $w_1, \ldots, w_d$ such that
  \begin{equation*}
    v_i = w_1 \equiv_\rho w_2 \equiv_\lambda w_3 \equiv_\rho \cdots
    \equiv_\lambda w_{d-2} \equiv_\rho w_{d-1} \equiv_\lambda w_d = u_i.
  \end{equation*}
After the discussion at the beginning of this section, we have $\Alpha(v_i) = \Alpha(w_2) = \cdots =
  \Alpha(w_{d-1}) = \Alpha(u_i)$. Thus, by \reflem{lem:da}, we have
  $\varphi(p u_i) \Req \varphi(p)$ if and only if $\varphi(p w_g) \Req
  \varphi(p)$, and $\varphi(v_i q) \Leq \varphi(q)$ if and only if
  $\varphi(w_g q) \Leq \varphi(q)$ for all $p,q \in A^*$. As in Cases 1 and 2, we conclude that for each $1 \le e < d$,
\begin{itemize}
\item if $w_e \equiv_\rho w_{e+1}$, then
\begin{align*}
w_e c_i\cdots c_\ell v_{\ell+1} &\equiv_\phi w_{e+1} c_i\cdots c_\ell v_{\ell+1} \text{, and thus} \\
u_1c_1 \cdots u_i c_{i-1}w_e c_i\cdots c_\ell v_{\ell+1} &\equiv_\phi u_1c_1 \cdots u_i c_{i-1}w_{e+1} c_i\cdots c_\ell v_{\ell+1};
\end{align*}
\item and if $w_e \equiv_\lambda w_{e+1}$, then
\begin{align*}
u_1c_1 \cdots c_{i-1}w_e &\equiv_\phi u_1c_1 \cdots c_{i-1} w_{e+1}  \text{, and thus} \\
u_1c_1 \cdots c_{i-1}w_e c_i v_{i+1}\cdots c_\ell v_{\ell+1} &\equiv_\phi u_1c_1 \cdots c_{i-1}w_{e+1} c_i v_{i+1}\cdots c_\ell v_{\ell+1}.
\end{align*}
\end{itemize}
It follows by transitivity of $\equiv_\phi$ that (Eq($i$)) holds.

\paragraph{Concluding the proof}
We have now established (Eq($i$)) for every $1\le i \le \ell+1$. It follows immediately, by transitivity, that $u \equiv_\varphi v$.
  \qed

\section{Conclusion}

We have shown that for each $m\ge 1$, it is decidable whether a given regular language is $\FO^2_m[<]$-definable. Previous results in the literature only showed decidability for levels 1 and 2 of this quantifier alternation hierarchy. Our decidability result follows from the proof that $\VarFO^2_m$ (the pseudovariety of finite monoids corresponding to the $\FO^2_m[<]$-definable languages) is equal to the intersection $\R{m+1}\cap\L{m+1}$, which was known to be decidable.

This result implies the decidability of the levels of the hierarchy given by $\V_1 = \J$ and $\V_{m+1} = \V\dast\J$, since Straubing showed that $\V_m = \VarFO^2_m$ \cite{str11csl}. Straubing  used general results of Almeida and Weil on two-sided semidirect products to deduce from this that $\VarFO^2_2$ is decidable, but these results do not extend to $\VarFO^2_m$ when $m > 2$ (\cite{1998:AlmeidaWeil,2002:Weil}, see \cite[Sec. 5]{str11csl} for a discussion).

We also showed that the decision procedure whether a regular language $L$ is $\FO^2_m$-definable, is in \textsc{Logspace} on input the multiplication table of the syntactic monoid of $L$, and in \textsc{Pspace} on input the minimal automaton of $L$. The result behind this statement is the fact that membership in $\R{m}$ and in $\L{m}$ is characterized by a small set of (rather complicated) identities. Straubing conjectured a different and simpler set of identities (Conjecture~\ref{straubingconjecture} above). Our results do not confirm this conjecture, which it would be interesting to settle. 

{\small
\bibliographystyle{abbrv}

}

\end{document}